\newenvironment{frontmatter}{}{\maketitle}
\newenvironment{keyword}{\emph{Keywords: }}{}
\newcommand{\sep}{;\ }
\newtheorem{theo}{Theorem}
\newtheorem{prop}{Proposition}
\newtheorem*{prop*}{Proposition}
\newtheorem{coro}{Corollary}
\theoremstyle{definition}
\newtheorem{defn}{Definition}
\newcommand{\suh}{$\mathcal{I}^+$\xspace}
\newcommand{\suhm}{\mathcal{I}^+}
\newcommand{\col}{$\mathcal{G}^+$\xspace}
\newcommand{\colm}{\mathcal{G}^+}
\newcommand{\Gomboc}{Gömböc\xspace}
\newcommand{\plantri}{{\tt plantri}\xspace}
\newcommand{\sdeg}{m} 
\newcommand{\maps}{|\mathcal{I}^+|}
\newcommand{\quads}{|\mathscr{Q}|}
\newcommand{\sdquads}{|\mathscr{Q}_\text{SD}|}
\renewcommand{\maps}{e}
\renewcommand{\quads}{q}
\renewcommand{\sdquads}{e_\text{SD}}
\newcommand{\kikomment}[1]{} 
\newcommand{\nfour}{}
\newcommand{\neight}{}
\newcommand{\ntwelve}{}
\newcommand{\nsteen}{}
\begin{document}


\begin{frontmatter}

\title{Generating spherical multiquadrangulations by restricted vertex splittings and the reducibility of equilibrium classes}

\author{
Richárd Kápolnai\footnote{\url{kapolnai@iit.bme.hu},
Dept.\ of Control Engineering and Information Technology, Budapest University of Technology and Economics, H-1117 Magyar tudósok körútja 2., Budapest, Hungary
}
\and
Gábor Domokos\footnote{\url{domokos@iit.bme.hu},
Dept.\ of Mechanics, Materials \& Structures, Budapest University of Technology and Economics, H-1521, Műegyetem rakpart 1-3.\ K.II.42., Budapest, Hungary
}
\and
Tímea Szabó\footnote{\url{tszabo@szt.bme.hu},
Dept.\ of Mechanics, Materials \& Structures, Budapest University of Technology and Economics, H-1521, Műegyetem rakpart 1-3.\ K.II.42., Budapest, Hungary
}
}




\end{frontmatter}


\begin{abstract}

A spherical quadrangulation is a loopless graph embedded on the sphere such that each face is bounded by a walk of length 4, parallel edges allowed.
The family of the isomorphism classes of quadrangulations can be generated by a sequence of graph operations called vertex splitting, starting from the path graph $P_2$ with three vertices and two edges.
$P_2$ is also referred to as the common ancestor of all quadrangulations.
We define the degree $1\leq D \leq \lfloor d/2\rfloor$ of a splitting $S$ (where $d$ is the degree of the split vertex) and consider \em restricted \rm splittings $S_{i,j}$ with $1\leq i\leq D \leq j\leq \lfloor d/2\rfloor$.
As Brinkmann and coworkers have recently pointed out,
 restricted splittings $S_{2,3}$ generate all simple quadrangulations.

Here we investigate the cases $S_{1,2},S_{1,3},S_{1,1},S_{2,2},S_{3,3}$.
First we show that the restricted splittings $S_{1,2}$ are exactly the \emph{monotone} ones in the sense that the resulting graph contains the original as a subgraph.
Then we proceed to show that they define a set of nontrivial ancestors beyond $P_2$ and each quadrangulation has a unique ancestor.

Our results have a direct geometric interpretation in the context of mechanical equilibria of convex bodies.
The latter can be defined as the scalar distance $R(\theta,\varphi)$ measured from the
center of gravity and the Morse-Smale complex associated with the gradient of $R$ corresponds to a
2-coloured quadrangulation with independent set sizes $s,u$.
The numbers $s,u$ of coloured vertices identify the \em primary equilibrium class \rm associated with the body by Várkonyi and Domokos.
We show that the $S_{1,1}$
and $S_{2,2}$ splittings generate all primary equilibrium classes
(in case of $S_{1,1}$ from a single ancestor, in case of $S_{2,2}$
from a finite nontrivial set of ancestors). This is closely related to the geometric results of Várkonyi and Domokos where they show that specific geometric transformations can generate
all equilibrium classes.

If, beyond the  numbers $s,u$, the full topology of the
quadrangulation is considered, we arrive at the more refined
\em secondary equilibrium classes\rm.
As Domokos, Lángi and
Szabó showed recently,
 one can create the geometric
counterparts of unrestricted splittings to generate all secondary classes.
Our results show that restricted, monotone
splittings $S_{1,2}$, while adequate to generate all primary classes
from one single ancestor,
can only generate a limited range of secondary equilibrium classes from the same ancestor.
The geometric interpretation of the additional ancestors defined
by monotone splittings shows that minimal polyhedra play a key
role in this process.
We also present some computational results on the cardinality of secondary equilibrium classes and multiquadrangulations.

\end{abstract}

\begin{keyword}
plane multiquadrangulation
\sep
unrooted unsensed colored map
\sep
vertex splitting
\sep
census
\sep
convex body
\sep
equilibrium class


\end{keyword}










\section{Introduction}
\label{sec:intro}

Our work is motivated by the classification system of convex, homogeneous 3D bodies introduced by \citet{Domokos2006}.
They map each body its \emph{primary equilibrium class} defined by the \emph{numbers} of the stable and unstable equilibrium points of the body surface.
Moreover, the isomorphism classes of the \emph{topologies} of the equilibria provides a refined, secondary classification system, where such a topology can be genuinely represented by a 2-coloured quadrangulation \cite{Domokos2012}.

\subsection{Generating multiquadrangulations}

A \emph{quadrangulation} of the sphere is a loopless graph embedded in the sphere having every face bounded by a closed walk of length 4.
We allow parallel edges, and the boundary walk may repeat edges or vertices.
This definition was also used by~\citet{Mohar2010}, however, \citet{Archdeacon2001} applied the word ``pseudoquadrangulation'' instead for multigraphs.
If we want to emphasize that the quadrangulation may have parallel edges, it is called a \emph{multiquadrangulation},
if a quadrangulation has no parallel edges, it is called a \emph{simple quadrangulation}.
Note that the 2-path $P_2$ (the path of length 2 with two edges and three vertices) is the smallest quadrangulation, and the 4-cycle $C_4$ (the cycle of length 4) is the smallest simple quadrangulation, illustrated on \autoref{fig:smallests}.

Two quadrangulations are considered isomorphic, if there is a homeomorphism from one to another that either preserves or reverses the orientation of the embedding.
Equivalently, the \emph{cyclic ordering} of the incident edges at each vertex is either preserved or reversed, so e.g.\  a graph is isomorphic to its reflection.
Such an isomorphism class is also called an \emph{unsensed, unrooted map} in the literature \cite{Walsh2007}.
Let $\mathscr{Q}$ denote the family of all multiquadrangulations, and $\mathscr{Q}_1$ the family of all simple quadrangulations.

We say a graph family $\mathscr{F}$ is \emph{generated} from the starting set ${K}\subset\mathscr{F}$ by some given graph operations, if each graph in $\mathscr{F}$ can be constructed from some graph of ${K}$ by applying a finite series of the given graph operations.
All our graph operations are based on the vertex splitting, depicted on \autoref{fig:splitting}, explained as follows.
\begin{figure}
  \centering
  \subfigure[$\sdeg>1$]{
    \label{fig:msplitting}
    \includegraphics{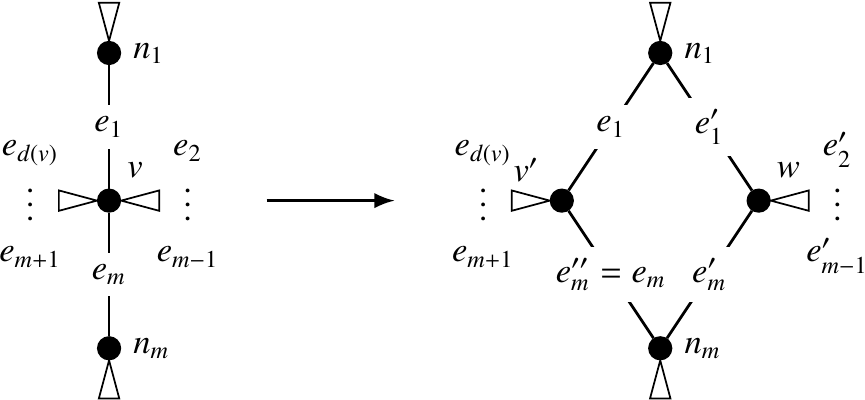}
  }
  \qquad
  \subfigure[$\sdeg=1$]{
    \label{fig:1splitting}
    \includegraphics{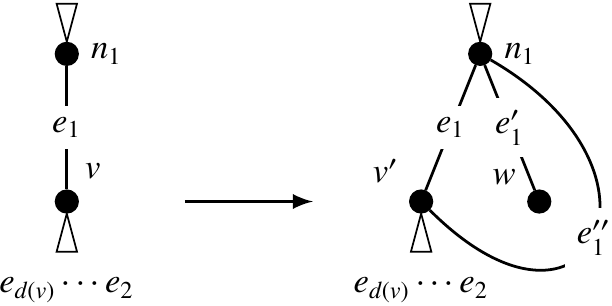}
  }
  \caption{Vertex splitting of degree $m$.}
  \label{fig:splitting}
\end{figure}
The embedding (the cyclic ordering) of the graphs showed is also important so the small open triangles denote that other edges \emph{may} occur only at that position so they clarify the cyclic ordering of the edges in the original (left to the arrow) and in the resulting (right to the arrow) graph.
Vertex splitting is also known in the literature as the inverse of face contraction.

\citet{Batagelj1989, Negami1993} showed that the splittings generate $\mathscr{Q}_1$ from $C_4$.
There are a number of related results regarding inductive generation of certain simple quadrangulation families, e.g.\ 
\citet{Batagelj1989} gave inductive definition of 3-connected quadrangulations,
\citet{Nakamoto1999} generated quadrangulations with minimum degree 3, 
\citet{Brinkmann2005} improved efficiency for all these families above.
While \cite{Batagelj1989, Negami1993, Nakamoto1996b} mainly focus on simple quadrangulations, the following observation is a straightforward extension of their results: the splitting generalized for parallel case as well ($\sdeg=1$ on \autoref{fig:1splitting}) generates $\mathscr{Q}$ from $P_2$.

As shown on \autoref{fig:splitting}, the splittings replace a vertex $v$ with vertices $w$ and $v'$ dividing the edges of $v$.
The \emph{degree of a splitting} 
 is $D:=\min\{d(v'),d(w)\}$ where $d(v)$ denotes the degree of the vertex $v$.
Note that $1\leq D\leq \lfloor d(v)/2 \rfloor$ always holds.
We consider \emph{restricted splittings} $S_{i,j}$ with a limited range of $D$ such that $1\leq i\leq D \leq j\leq \lfloor d/2\rfloor$.
\citet{Brinkmann2005} showed that the restricted splittings $S_{2,3}$ (with $2\leq D\leq 3$) are enough to generate $\mathscr{Q}_1$ from $C_4$.
Their result can be easily extended for parallel graphs using the generalized vertex splitting, so our previous observation can be improved as follows:

\begin{theo}
\label{theo:splitting}
The restricted splittings $S_{1,3}$ generate $\mathscr{Q}$ from $P_2$. 
\end{theo}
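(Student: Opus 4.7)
The strategy is to extend the proof of Brinkmann et al.'s $S_{2,3}$-result to all of $\mathscr{Q}$ by induction on $|E(Q)|$. The base case $Q = P_2$ is immediate.

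For the inductive step I would distinguish two cases. (i) If $Q$ is simple and $Q \ne C_4$, then \citet{Brinkmann2005} directly provides an $S_{2,3}$ inverse splitting of $Q$; since $S_{2,3} \subset S_{1,3}$, induction applies. If $Q = C_4$, one checks directly that $C_4$ arises from $P_2$ by the unique non-trivial splitting of the degree-2 center (distributing the two incident edges evenly between the two offspring vertices), which has $D = 2$ and so lies in $S_{2,2} \subset S_{1,3}$. (ii) If $Q$ has parallel edges, select an \emph{innermost} parallel pair $e_1, e_2$ between vertices $u, v$, meaning one of the two closed disks bounded by $e_1 \cup e_2$ contains no further parallel pair. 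Such a disk cannot itself be a single face, because its boundary walk has length 2, violating the length-4 condition. Hence the disk must contain internal structure, and the innermost choice forces this structure to be sufficiently simple to be collapsed by an $S_{1,j}$ inverse splitting with $j \le 3$ (typically $j = 1$, realized by the $m = 1$ splitting of \autoref{fig:1splitting}), strictly reducing $|E(Q)|$ while keeping the result in $\mathscr{Q}$.

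The main obstacle is case (ii): one must verify rigorously that an innermost parallel-edge configuration always admits such a reduction and that the induced inverse operation actually lies in $S_{1,3}$. This requires tracking the cyclic orderings at $u$ and $v$ together with the face walks crossing the innermost disk, paralleling Brinkmann et al.'s degree-based case analysis but now allowing $D = 1$. The extension to parallel edges is the ``easy'' step alluded to in the paragraph preceding the theorem, so the essential point is to make this extension precise and to confirm that the bound $j \le 3$ suffices uniformly; the bound on $j$ should inherit directly from Brinkmann et al.'s analysis, since the additional $S_{1,1}$ step only ever eliminates one parallel pair at a time and introduces no new high-degree substructure.
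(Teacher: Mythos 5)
Your case (i) is sound, but case (ii) --- which carries the entire content of the extension from $\mathscr{Q}_1$ to $\mathscr{Q}$ --- has a genuine gap. An innermost parallel pair $e_1,e_2$ does exist, but the disk it bounds need not be ``sufficiently simple'' to collapse at the 2-gon: its interior can be an arbitrarily large simple configuration. (A disk bounded by a 2-gon all of whose interior faces are quadrilaterals satisfies $V=F+1$, $E=2F-1$ with $F$ unbounded; already the interior path $v\!-\!x\!-\!y\!-\!u$ gives two interior quadrilaterals and no new parallel pair, and this can be subdivided further.) In such cases no $S_{1,1}$-contraction is available anywhere near the bounding pair, and the bound $j\le 3$ cannot be ``inherited'' from the $S_{2,3}$ result of \citet{Brinkmann2005}: their theorem concerns simple quadrangulations of the sphere and says nothing about a quadrangulation of a disk whose boundary is a 2-gon. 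What is true, and what you would actually have to prove, is that the interior of the disk always contains a vertex of degree at most 3 (the degree sum inside the disk is $2E=4V-6$, and $u,v$ absorb at least 4 of it, so the interior vertices have average degree below 4); but once you argue by degree counting, the innermost-disk machinery is doing no work and you have reproduced the paper's argument in a more complicated setting. As written, the sentence ``the innermost choice forces this structure to be sufficiently simple to be collapsed'' asserts essentially the theorem itself.

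The paper's proof avoids the simple/non-simple dichotomy entirely: by Euler's formula the degree sum of any multiquadrangulation is $4n-8$, so the minimum degree is 1, 2 or 3, and \autoref{lem:mindeg} shows that a quadrangulation with $n>4$ and minimum degree $k$ admits a $k$-contraction; induction on $n$ then yields \autoref{theo:splitting}, with the four quadrangulations on at most 4 vertices checked by hand. If you wish to keep your structure, replace the collapsibility claim in case (ii) by this degree argument restricted to the interior of the innermost disk, and invoke \autoref{lem:mindeg} to convert the low-degree interior vertex into an admissible 1-, 2- or 3-contraction (note also that your ``typically $j=1$'' is misleading: even the small path example above requires $j=2$, and deeper interiors may require $j=3$ at a vertex far from the bounding pair).
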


In this paper we investigate the hierarchy generated by restricting the splittings $S_{1,2}$ (see \autoref{fig:monotone}), and we also show some application of restrictions $S_{1,1}$, $S_{2,2}$ and $S_{3,3}$.
\begin{figure}
  \centering
    \includegraphics{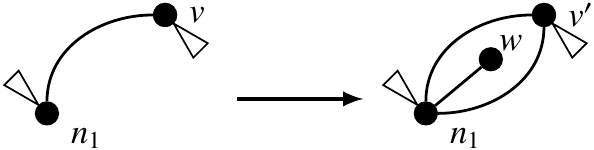}

    \includegraphics{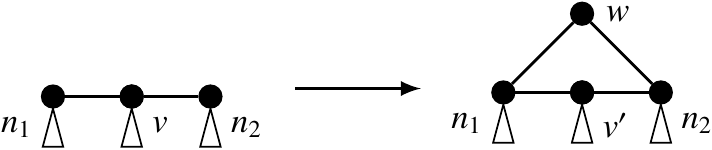}
    \qquad
    \includegraphics{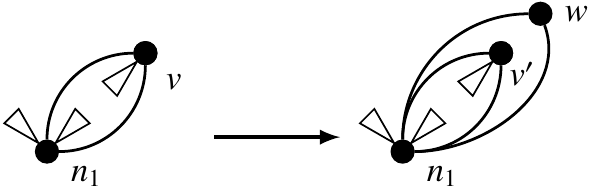}
  \caption{The monotone vertex splittings $S_{1,2}$. The 2-splitting has two variants: simple case and parallel case.}
  \label{fig:monotone}
\end{figure}
The main reason we focus on splittings $S_{1,2}$ is that these are not only local modifications but they purely extend the graph without removing any edge.
Formally, we say a graph operation is \emph{monotone} if the original graph is the embedded subgraph of the resulting one.
Actually $S_{1,2}$ are the only monotone operations on quadrangulations introducing only one new vertex.
The local monotone operations are also called \emph{face subdivisions} in the literature meaning one face is divided into smaller regions (e.g.\  \cite{Tutte1963,Mohar2010}).
\citet{Tutte1963} called the \emph{order} of a face subdivision the number of the introduced vertices in the operation, although he applied this concept on triangulations.
So on quadrangulations the monotone vertex splittings are exactly the face subdivisions of order 1.

For the splitting $S_{D,D}$ with a given $D$, we use the shorthand \emph{$D$-splitting}, and we call its inverse \emph{$D$-contraction}.
If a $D$-contraction is applicable to a graph, we say the graph is \emph{$D$-contractible}, otherwise we say the graph is \emph{$D$-irreducible}.
It is known that among the simple quadrangulations, $C_4$ is the only graph which is $D$-irreducible for any $D$ \cite{Batagelj1989, Negami1993}.
It follows from \autoref{theo:splitting}
 that among the multiquadrangulations, only $P_2$ is $D$-irreducible for any $D$, hence $P_2$ is also referred to as the only ancestor of all quadrangulations.
However, restricting the splittings to $S_{1,2}$ also admits other, nontrivial ancestors.
Throughout this paper we say a graph is an \emph{irreducible ancestor} (or shortly \emph{irreducible}) if it is 1-irreducible and 2-irreducible.
We mention that the concept of irreducible graphs with respect to some given graph operations was analogously used in e.g.\  \cite{Negami1993, Nakamoto1996b}.

We characterise the irreducible graphs as the ones with minimum degree 3, and extend a result of \citet{Batagelj1989} that the small graphs with less than 8 vertices are not irreducible, except $P_2$:

\begin{theo}
  \label{theo:gomboc}
  Every quadrangulation with less than 8 vertices is generated from $P_2$ by monotone splittings.
\end{theo}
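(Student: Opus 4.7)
The plan is to combine the characterization of irreducible multiquadrangulations as those of minimum degree $3$, stated just above the theorem, with an Euler-formula vertex count, and then close by induction on $V$.

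First, I would record the Euler bound. For any multiquadrangulation on the sphere, summing edge incidences face by face gives $4F = 2E$, and combined with $V - E + F = 2$ this yields $E = 2V - 4$. If the minimum degree were at least $3$, then the handshake identity $2E = \sum_v d(v) \geq 3V$ would force $4V - 8 \geq 3V$, i.e., $V \geq 8$. Hence every quadrangulation with $V \leq 7$ contains a vertex of degree $1$ or $2$.

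Second, I would invoke the preceding characterization: any such low-degree vertex witnesses $S_{1,2}$-reducibility, so the graph admits either an $S_{1,1}$- or an $S_{2,2}$-contraction. Both contractions are monotone, being inverses of splittings in $S_{1,2}$, and produce a quadrangulation on $V - 1$ vertices. I would then run induction on $V$: the base case $V = 3$ is $P_2$ itself, and for $4 \leq V \leq 7$ the previous step supplies a monotone contraction to a quadrangulation $G'$ still having at most $7$ vertices; by the inductive hypothesis $G'$ is generated from $P_2$ by monotone splittings, so appending the reverse of the contraction realizes $G$ from $G'$ via one further monotone splitting.

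The main obstacle is not \autoref{theo:gomboc} itself but the underlying characterization on which it rests, namely that every vertex of degree at most $2$ in a multiquadrangulation genuinely yields a valid $S_{1,2}$-contraction back into the family. Verifying this requires a careful local analysis of the face(s) incident to the low-degree vertex, particularly when parallel edges are involved and the boundary walk traverses the same edge twice; once that analysis is in hand, \autoref{theo:gomboc} follows immediately from the Euler bound and the induction above.
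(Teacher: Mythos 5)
Your argument is essentially the paper's own: the Euler-formula count $4V-8=2E=\sum_v d(v)\geq 3V$ forcing $V\geq 8$ for minimum degree $3$ is precisely the content of the paper's generalized Batagelj proposition, and the reduction of a degree-$1$ or degree-$2$ vertex to a monotone contraction (including the delicate parallel-edge case you flag, e.g.\ the graph $Q_3$ where the degree-$2$ vertex itself is not $2$-contractible but a degree-$1$ vertex saves the induction) is exactly what the paper's Proposition~\ref{lem:mindeg} supplies before the same induction on the number of vertices. The proposal is correct and takes the same route.
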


We strengthen this theoretical result with observations on a data set generated by our computer program which we developed to explicitly enumerate every possible quadrangulation for a given size, based on the software \plantri{} \cite{Brinkmann2007}.
Furthermore, our data set also shows that there are only three irreducible graphs up to a size of 10 vertices.
Besides $P_2$, the other ones are the radial graphs (a.k.a.\  vertex-face incidence graphs \cite{Mohar2001}) of the skeletons of the two smallest polyhedra.
The \emph{radial graph} $R(G)$ of an embedded graph $G$ is a bipartite embedded graph such that one of the independent vertex sets of $R(G)$ corresponds to the vertex set of $G$, the other one to the face set of $G$.
Two vertices are connected in $R(G)$ with the same multiplicity as the incidence multiplicity of their preimages in $G$ (i.e.\  the appearance count of a vertex in the boundary walk of a face), with the obvious cyclic orderings.
In addition, we prove

\begin{theo}
\label{theo:minpol}
The radial graph of any polyhedral skeleton is irreducible, and they are generated from the radial graphs of the pyramids with polygonal base by the restricted splitting $S_{3,3}$.
\end{theo}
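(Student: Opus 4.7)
For the irreducibility of $R(G)$, I would use that the polyhedral skeleton $G$ is 3-connected planar. Every face of $G$ is then a simple cycle, so no vertex appears more than once on any face boundary; this means $R(G)$ has no parallel edges, and no 1-contraction applies. Moreover, 3-connectedness forces both vertex degrees and face lengths of $G$ to be at least 3, hence every vertex of $R(G)$ has degree at least 3, and no 2-contraction applies either. Thus $R(G)$ is irreducible in the sense defined just before the theorem.

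For the generation by $S_{3,3}$, my plan is induction on $|V(G)|$ with pyramids as the base case. For the inductive step, the standard Euler-formula argument (combining $V-E+F=2$ with $d(v),|f|\geq 3$) guarantees that every polyhedron has either a vertex of degree 3 or a triangular face; each such feature yields a candidate 3-contraction in $R(G)$. Concretely, a degree-3 vertex $v$ of $G$ is a degree-3 vertex-vertex of $R(G)$, and the corresponding 3-contraction amounts in $G$ to contracting one of the three edges at $v$; a triangular face $T$ of $G$ is a degree-3 face-vertex of $R(G)$, and the corresponding 3-contraction amounts in $G$ to deleting one of the three boundary edges of $T$ and merging $T$ with the adjacent face. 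I would choose such a reduction so that the result is still a polyhedron $G'$; the inductive hypothesis then supplies a sequence of $S_{3,3}$ splittings from some $R(P_n)$ to $R(G')$, and appending the reverse of the chosen reduction yields such a sequence to $R(G)$.

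The main obstacle is establishing that at least one admissible reduction always exists when $G$ is not a pyramid. Edge contraction at a degree-3 vertex $v$ is inadmissible when $v$ and the contracted neighbor share a third common vertex, since then the contraction creates parallel edges and destroys 3-connectedness; edge deletion at a triangular face $T$ is inadmissible when the deletion drops a neighboring vertex to degree 2, or when the merged face fails to be a simple cycle. The heart of Part 2 is thus the contrapositive claim: if every candidate reduction is inadmissible, then $G$ is forced to be a pyramid. Intuitively, the obstruction at every degree-3 vertex concentrates common neighbors at a single apex-like vertex, while the obstruction at every triangular face forces the ``base'' edges of those triangles to lie on a common polygonal face; together, these local constraints, combined with planarity and 3-connectedness, pin $G$ down as a pyramid. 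Carrying out this local-to-global case analysis is the delicate step of the proof.
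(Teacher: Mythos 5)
Your argument for the first half is fine and is essentially the same as the paper's, just carried out directly: the paper notes (citing Brinkmann et al.) that the radial graph of a simple 3-connected graph is a simple 3-connected quadrangulation without separating 4-cycles, hence has minimum degree 3, and then irreducibility follows from Proposition~\ref{lem:mindeg}(\ref{lem:mindeg:irred}). Your observation that vertex degrees and face lengths of a polyhedral skeleton are both at least 3 gives the same minimum-degree bound.

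The second half, however, has a genuine gap. Your translation of 3-contractions in $R(G)$ into edge contractions at degree-3 vertices and edge deletions at triangular faces of $G$ is correct, and the Euler-formula count does guarantee such a feature exists. But the step you defer as ``delicate'' --- that whenever $G$ is not a pyramid, at least one of these candidate reductions is admissible (i.e.\ preserves simplicity and 3-connectivity), or equivalently the contrapositive that total obstruction forces $G$ to be a wheel --- is not a routine local-to-global argument; it \emph{is} the theorem. Note that Tutte's wheel theorem does not suffice here, because you are only allowed the two specific reduction types corresponding to $S_{3,3}$, not arbitrary 3-connectivity-preserving deletions and contractions. The paper avoids this entirely by invoking the result of \citet{Brinkmann2005} that the class $\mathscr{Q}_4$ of simple 3-connected quadrangulations without separating 4-cycles is generated by $S_{3,3}$ from the pseudo-double wheels, together with the observation that a pseudo-double wheel on $2k+2$ vertices is exactly the radial graph of the $k$-gonal pyramid. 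To make your version complete you would either have to carry out that case analysis in full (essentially reproving the Brinkmann et al.\ generation theorem) or cite it as the paper does.
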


Then we prove that the irreducible ancestor of any graph is unique, i.e.\  a graph cannot be generated starting from two different irreducible graphs:

\begin{theo}
  \label{theo:ancestor}
  If a quadrangulation can be generated from both irreducible ancestors $A_1$ and $A_2$ by monotone splittings, then $A_1$ is isomorphic to $A_2$.
\end{theo}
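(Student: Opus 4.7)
The proof plan is to view monotone contractions (the inverses of the $S_{1,2}$ splittings) as a terminating rewriting system on quadrangulations considered up to isomorphism, and then to apply Newman's diamond lemma. Termination is automatic because every contraction reduces $|V(Q)|$ by exactly one, so each descending chain $Q=Q^{(0)}\to Q^{(1)}\to\cdots$ stabilises at some irreducible graph after at most $|V(Q)|$ steps. Granted termination, it suffices to establish local confluence: whenever two contractions applied to the same $Q$ yield $Q_1$ and $Q_2$, there exist further contraction sequences $Q_1\to^\ast Q_1^\ast$ and $Q_2\to^\ast Q_2^\ast$ with $Q_1^\ast\cong Q_2^\ast$. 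Local confluence then upgrades to confluence, and hence any two irreducible descendants of $Q$—in particular the hypothesised $A_1$ and $A_2$—must be isomorphic.

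The task thus reduces to a finite case analysis on pairs of contractions applicable to the same $Q$. Label the two contracted vertices $w_1,w_2$ together with their types (1-contraction, 2-contraction in the simple variant, or 2-contraction in the parallel variant of Figure~3). When the closed neighbourhoods of $w_1$ and $w_2$ are disjoint in the embedding, the two operations act on genuinely disjoint local patches, so they commute and the common descendant is obtained by performing both. The substantive cases are those in which the neighbourhoods overlap, that is, $w_1$ and $w_2$ share a neighbour or lie in a common face. Monotonicity of $S_{1,2}$—the fact that a contraction deletes only the chosen low-degree vertex together with its incident edges—confines the ripple effect of a contraction to its (at most two) neighbouring vertices, whose degrees drop by at most one, which in turn sharply restricts the overlapping configurations to be examined.

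The main obstacle will be the overlapping configurations in which applying one contraction first destroys the other—for example, the degree-$2$ vertex targeted by a 2-contraction is a neighbour of the degree-$1$ pendant removed by a 1-contraction, and so no longer exists after the first step. For such cases the common descendant cannot be produced symmetrically; one must instead exhibit on $Q_1$ (respectively on $Q_2$) a replacement contraction that reaches the same quadrangulation one would have obtained in the opposite order. Because the local picture around two interacting low-degree vertices in a quadrangulation is tightly constrained by the length-$4$ face walks, the distinct embedded configurations are few and can be enumerated systematically, with the characterisation of irreducible graphs (minimum degree $3$) supplied earlier used to rule out degenerate subcases. Once each configuration is checked to admit a common descendant, the local-confluence diagram closes, Newman's lemma yields confluence, and uniqueness of the irreducible ancestor up to isomorphism follows.
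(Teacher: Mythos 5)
Your overall strategy---treat monotone contractions as a terminating rewriting system on isomorphism classes and invoke Newman's lemma---is legitimate and genuinely different from the paper's argument, but as written the proposal has a real gap: the entire mathematical content sits in the local-confluence case analysis, and that analysis is only promised, not performed. Worse, the premise you use to argue that the analysis is tractable is false. You assert that monotonicity ``confines the ripple effect of a contraction to its (at most two) neighbouring vertices'' because a contraction ``deletes only the chosen low-degree vertex together with its incident edges.'' That is true of the 2-contraction but not of the 1-contraction: by the paper's definition of the 1-splitting, the inverse operation removes the pendant vertex $w$, its edge $e'_1$ to $n_1$, \emph{and} one of the two parallel edges $e_1,e''_1$ joining $n_1$ to the vertex $v'$ at distance two from $w$ (the two edges bounding the bigon that contains $w$). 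So a 1-contraction changes the degree of a vertex not adjacent to the deleted one, and two contractions can interfere even when the closed neighbourhoods of the deleted vertices are disjoint (nested bigons, several pendants inside the same bigon, a pendant whose companion parallel edge is incident to the degree-2 vertex targeted by a 2-contraction, and so on). Until these configurations are enumerated and each critical pair is actually closed, local confluence---and hence the theorem---has not been established.

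For comparison, the paper avoids confluence altogether with a conservation argument: writing the hypothesis as a chain $A_1=G_0\to\cdots\to G_p=G\to\cdots\to G_{p+q}=A_2$ of splittings followed by contractions, it shows that every edge of $A_1$ (whose minimum degree is $3$ once $A_1\neq P_2$) survives each step. Splittings delete nothing; a 2-contraction deletes only edges with a degree-2 endpoint, which ancient edges never have; and a 1-contraction cannot delete two ancient edges, since that would force a face of length $2$ in $A_1$, while deleting one of the pair $e_1,e''_1$ can always be blamed on the non-ancient copy by the symmetry of the operation. Hence $A_1$ embeds in $A_2$ and, symmetrically, $A_2$ in $A_1$. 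If you want to salvage your route, you must either carry out the full critical-pair analysis with the corrected description of the 1-contraction, or prove a lemma playing the role of the paper's invariant.
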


\subsection{Generating secondary equilibrium classes}

The main goal of this paper is twofold: one is to contribute some observations on the reachability of quadrangulations i.e.\  which quadrangulations can be generated from others, the other is to apply them on the geometric interpretation mentioned earlier.
So finally, we translate our observations into the context of topology of the equilibrium points of convex bodies.
A generic convex body is given by its scalar height function $R(\theta, \varphi)$ which gives the distance between the surface and the mass center for any direction.
A surface point is an \emph{equilibrium} if the gradient of the height function $R$ is zero at that point (illustrated on the ellipsoid on \autoref{fig:horbit}).
\begin{figure}
  \subfigure[Height function $R$]{
    \label{fig:ellfel}
    \includegraphics[width=5cm]{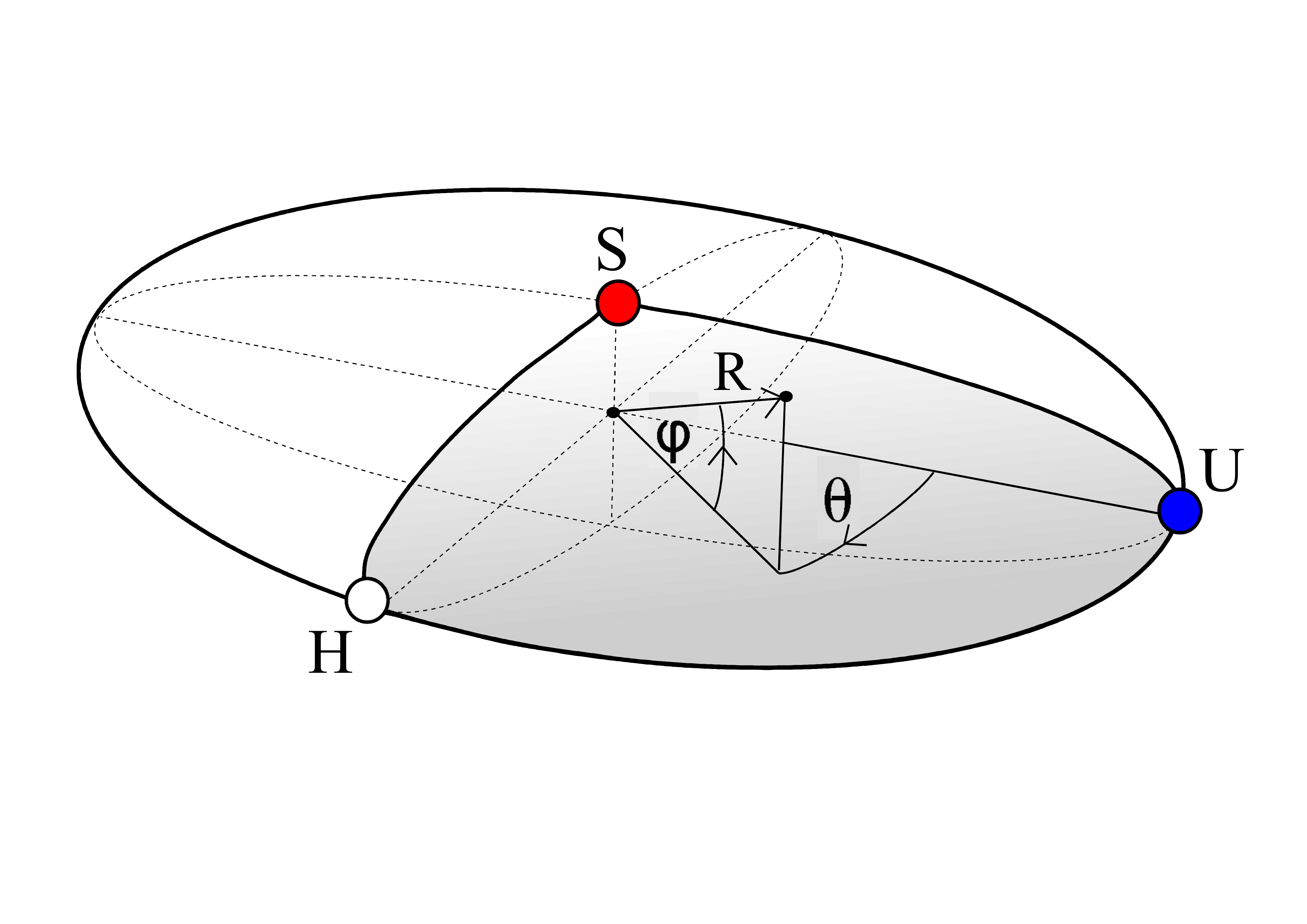}
  }
  \subfigure[Gradient field of $-R$]{
    \label{fig:ellgrads}
    \includegraphics[width=5cm]{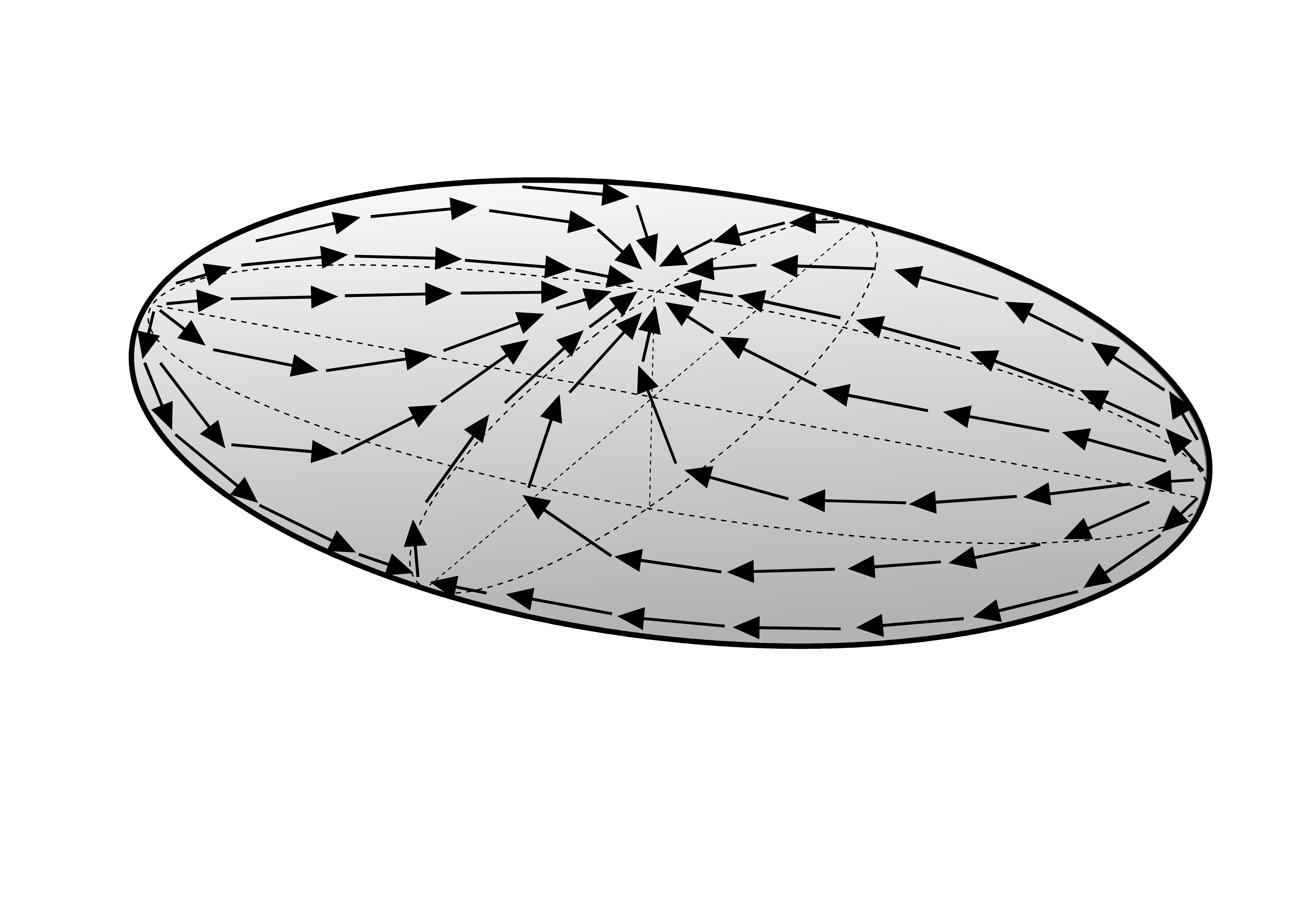}
  }
  \subfigure[Heteroclinic orbits]{
    \label{fig:ellorbits}
    \includegraphics[width=5cm]{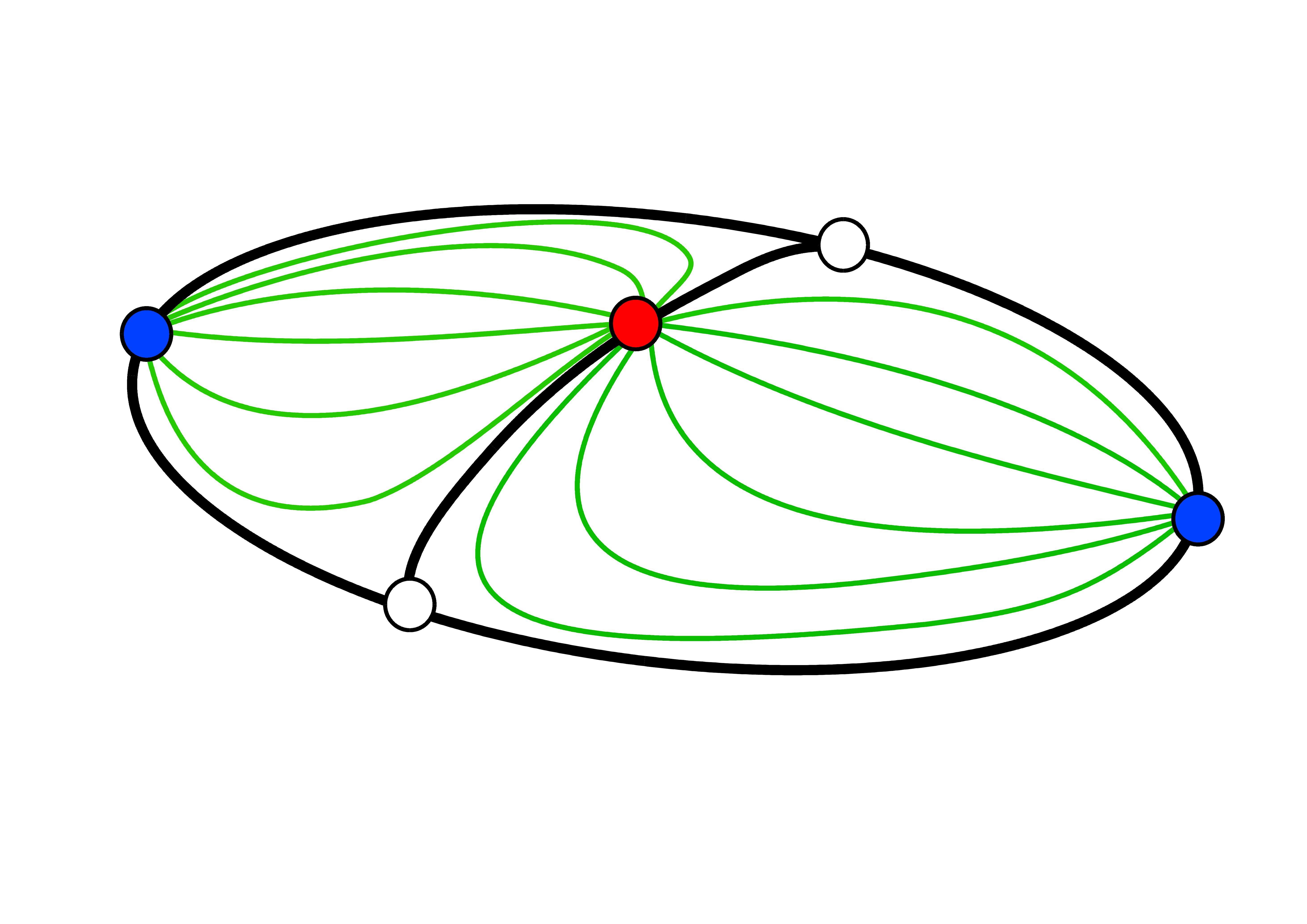}
  }

  \subfigure[Height function $R$]{
    \label{fig:ellmag}
    \includegraphics[width=5cm]{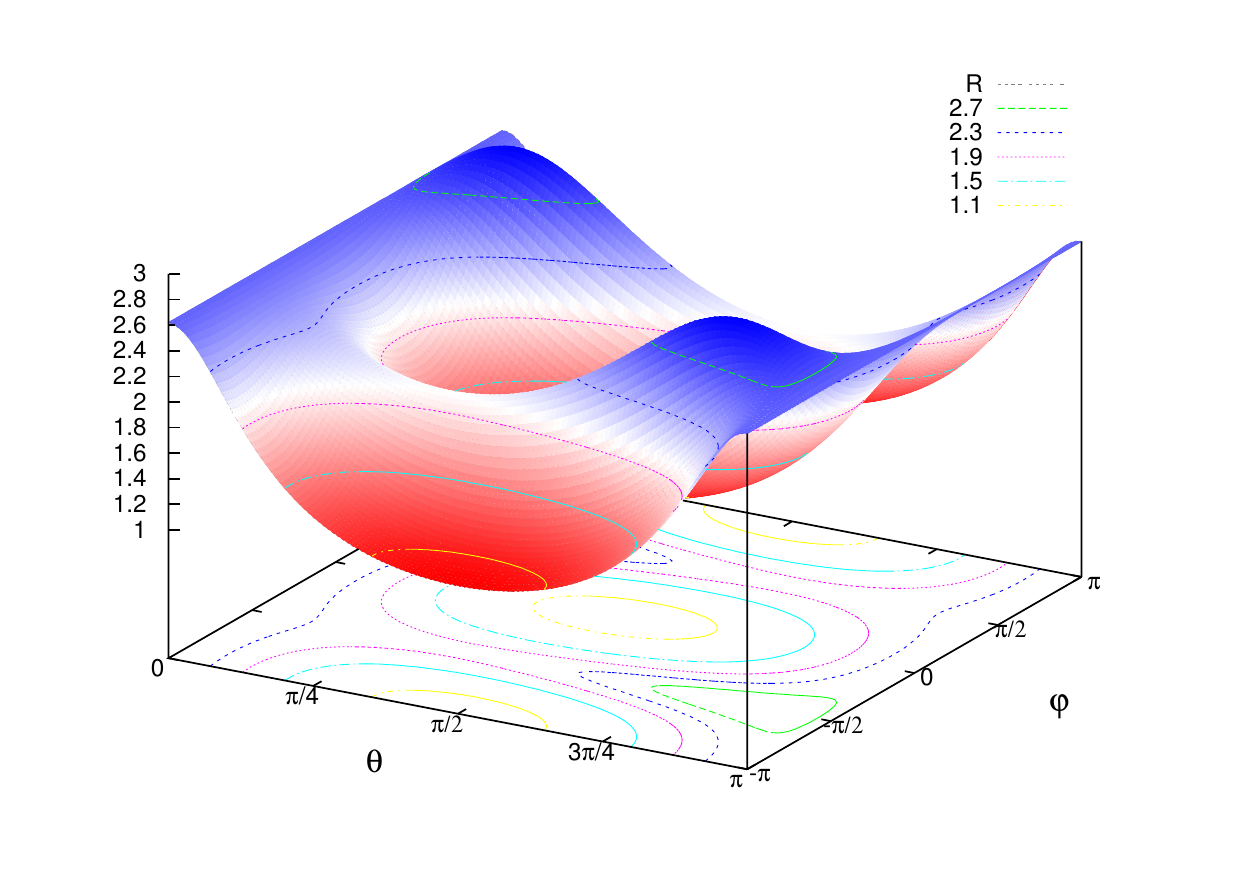}
  }
  \subfigure[Gradient field of $-R$]{
    \label{fig:ellgrad}
    \includegraphics[width=5cm]{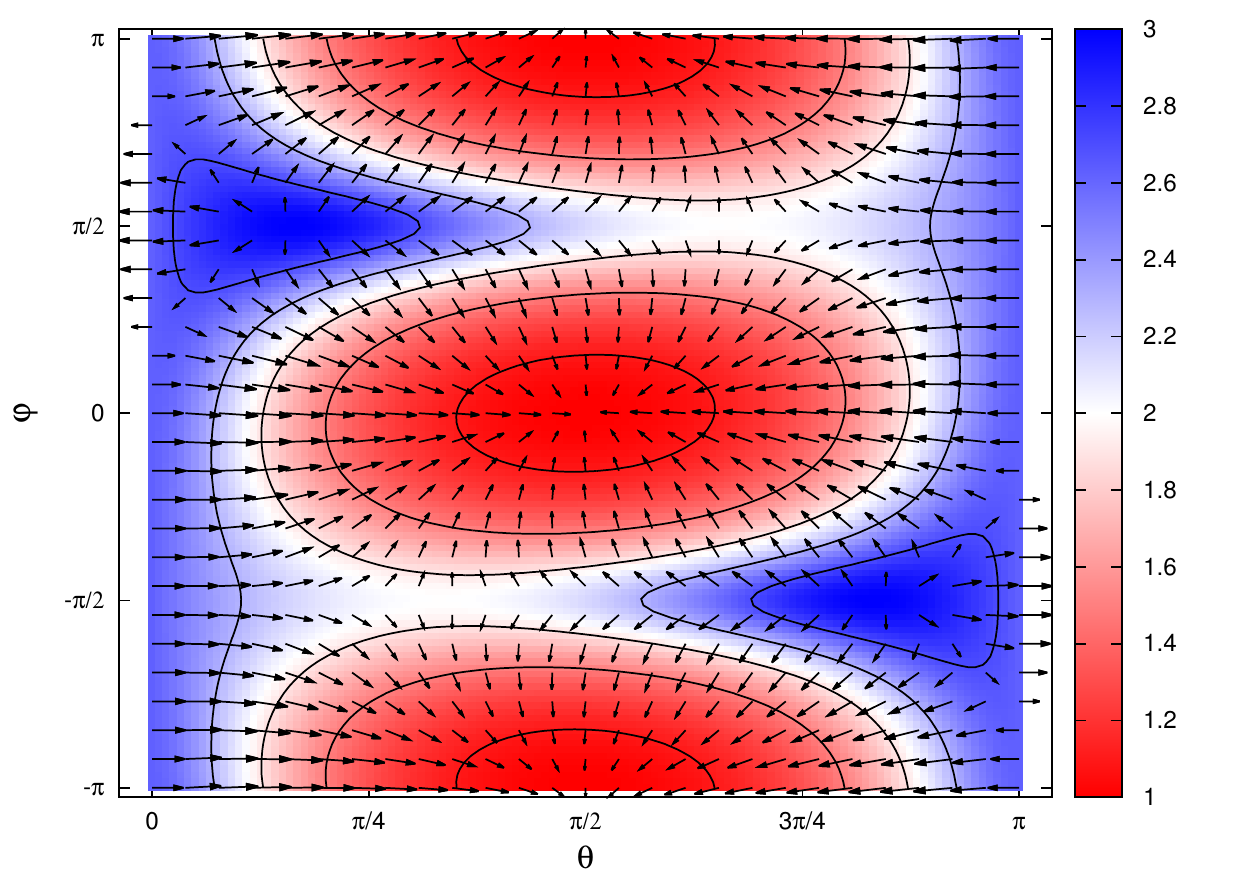}
  }
  \subfigure[Heteroclinic orbits]{
    \label{fig:ellorb}
    \includegraphics[width=5cm]{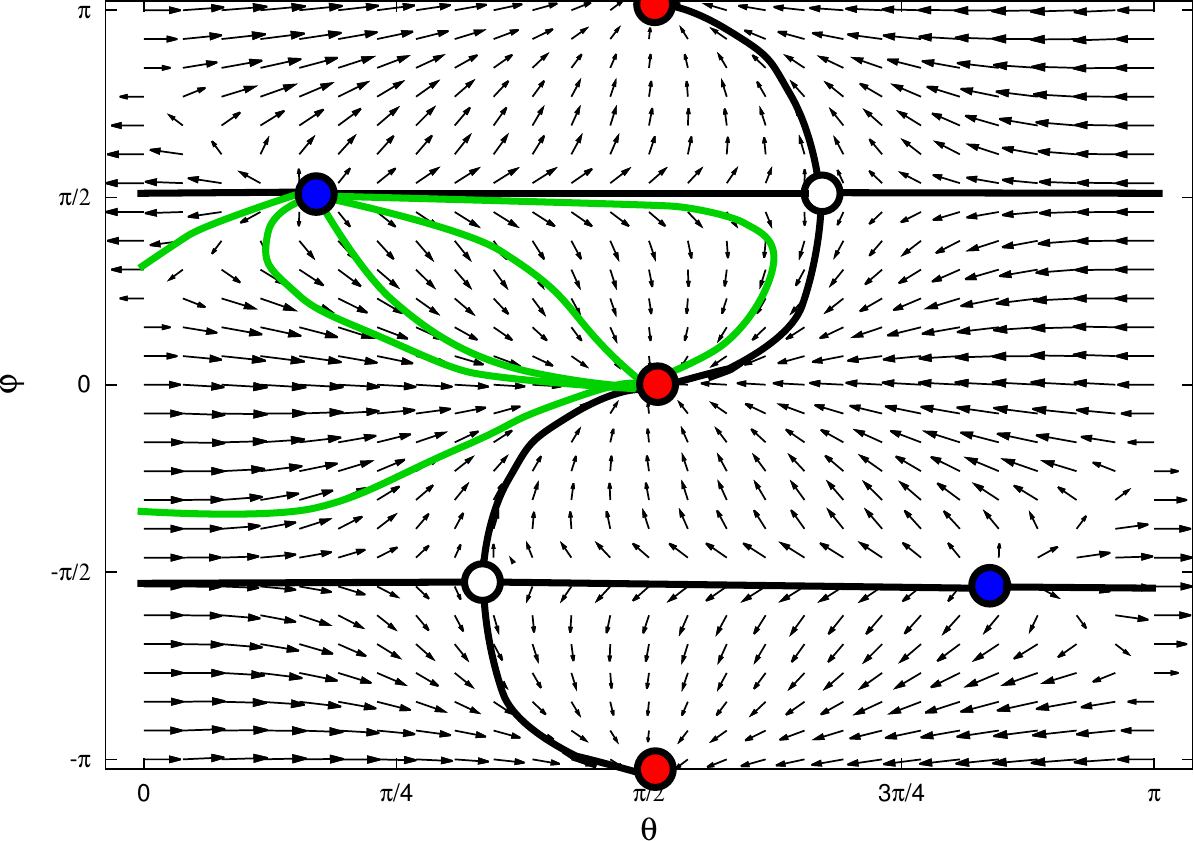}
  }

  \caption{Morse--Smale graph of the ellipsoid.
The axes of the ellipsoid are rotated for presentation purposes.
Colouring notation:  red: stable, blue: unstable, white: saddle.
}
  \label{fig:horbit}
\end{figure}
An equilibrium is \emph{stable} if $R$ takes a local minimum value, \emph{unstable} if $R$ takes a local maximum value, otherwise it is a \emph{saddle point}.
According to the Poincaré--Hopf theorem~\cite{Arnold1978}, the cardinality of stable, unstable and saddle equilibria of a body, denoted respectively by $s,u,h$, are deeply related: $s+u-h=2$,
thus the pair $\{s,u\}$ defines unambiguously the \emph{primary equilibrium class} of the body \cite{Domokos2006}.
The primary equilibrium classes can be refined into \emph{secondary equilibrium classes} which determine also the topology of the equilibria defined by the Morse--Smale complex of the height function $R$ \cite{Domokos2012}.
This topology is usually given by a 3-coloured spherical quadrangulation,
where the vertices are the equilibria, the colour of a vertex defines its equilibrium type (stable, unstable or saddle), and
an edge is a particular path on the surface connecting two equilibria, whose tangent vectors agree with the gradient vectors of $R$ (\autoref{fig:horbit}).
We transform these 3-coloured surface graphs with a bijection to generic 2-coloured quadrangulations keeping the underlying geometric meaning, and we use this latter form to define the secondary class as a fully combinatorial object.
We mention that a spherical quadrangulation is always bipartite \cite{Archdeacon2001,Gross2004} hence the 2-colouring is possible.

The data set yielded by our program enables us to present some statistics on number of possible secondary classes as well.
Because of the 1-1 correspondence of 2-coloured quadrangulations 
 and connected plane graphs \cite{Fusy2007,Brinkmann2005}, 
our numbers agree with the numbers of the unrooted and unsensed maps calculated by \citet{Wormald1981}, and exhaustively enumerated by \citet{Walsh1983, Walsh2012}.
\citet{Walsh2007} surveys different types of map census results achieved both exhaustive search and with formulae.
We are also able to give statistics on the number of multiquadrangulations using a trivial relation between the number of maps, quadrangulations  and self-dual quadrangulations for a fixed size.

\citet{Domokos2006} constructed the geometry of a representative body of the primary class $\{1,1\}$ referred as mono-monostatic body, also known as \emph{Gömböc}.
They also concluded that every primary class for all $s,u\geq 1$ is generated from the \Gomboc via their specific geometric transformations called \emph{Columbus' algorithm}. 
Columbus' algorithm is a sequence of sensitive modifications on a body perturbing the surface only at the vicinity of an equilibrium, such that the body belonging to the primary class $\{s,u\}$ is transformed to another one belonging to $\{s+1,u\}$ or $\{s,u+1\}$ (illustrated on \autoref{fig:colgeo}).
\begin{figure}
  \centering
  \subfigure[Cutting the surface with a plane near $U_0$]{
    \label{fig:colgeo1}
    \qquad
    \includegraphics[width=5cm,bb=11 02 1199 843]{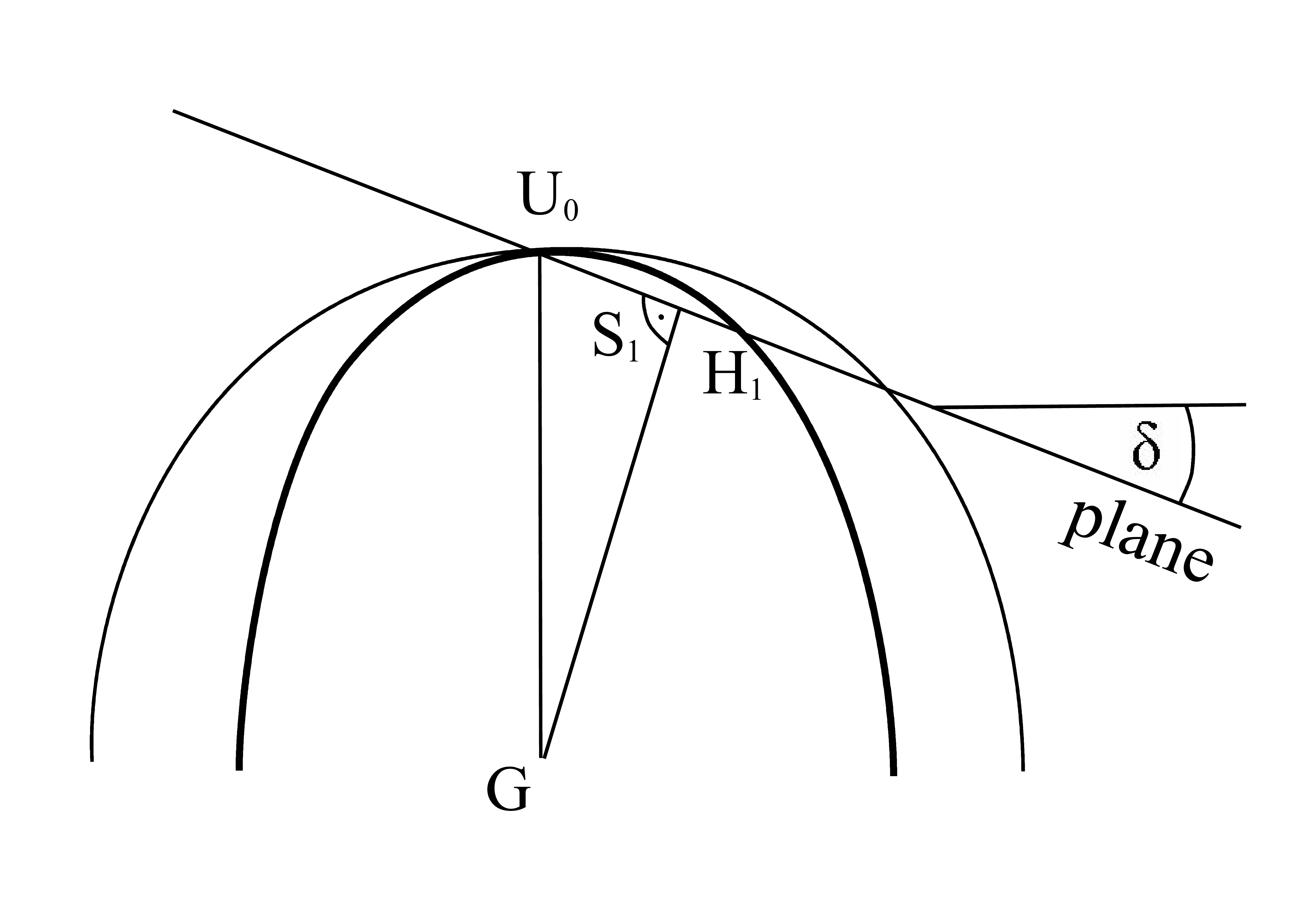}
  } 
  \qquad
  \subfigure[Modified gradient field of $-R$]{
    \quad
    \label{fig:colgeo2}
    \includegraphics{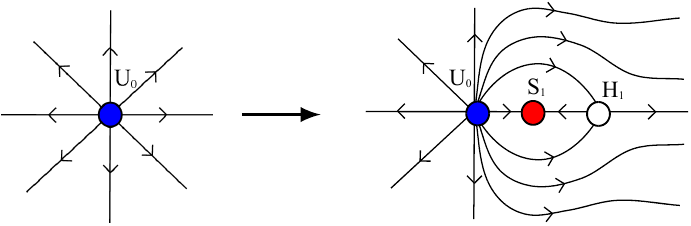}
    \quad
  }
  \caption{Applying a step of Columbus' algorithm on an egg-shaped surface.
Two equilibria appear on the surface: $S_1, H_1$.}
  \label{fig:colgeo}
\end{figure}
Because of the geometric feasibility of Columbus' algorithm, i.e.~the transformations always can be applied around any equilibrium, they referred to the \Gomboc as the ancestor of every primary class.

The combinatorial equivalent of the original Columbus' algorithm is the sequence of monotone coloured splittings.
The result of \cite{Domokos2006} can be reformulated as follows: for any $s,u$, some secondary class in the primary class $\{s,u\}$ is generated from the \Gomboc using the monotone coloured splittings.
(We detail the secondary class of \Gomboc in \autoref{sec:geometry}.)
We extend this statement with the following corollaries of our theorems:
\begin{coro}
  \label{coro:irreduc}
  Not every secondary class is generated from the \Gomboc using the monotone coloured splittings, but the ones with $s+u<8$ are all generated from the \Gomboc.
\end{coro}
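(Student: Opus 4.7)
The corollary has a negative and a positive half, which I would address separately.

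For the negative half, I would appeal to Theorems~\ref{theo:minpol} and~\ref{theo:ancestor}. The radial graph $R_T$ of the tetrahedral skeleton has $4+4 = 8$ vertices and is irreducible by Theorem~\ref{theo:minpol}, so by Theorem~\ref{theo:ancestor} its unique irreducible ancestor under monotone splittings is $R_T$ itself, and in particular not $P_2$. Endowing $R_T$ with either of its 2-colourings yields a secondary class with $s+u=8$ that cannot be generated from the \Gomboc by monotone coloured splittings, since any such chain would force the irreducible ancestors of its two endpoints to coincide.

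For the positive half, let $C$ be a secondary class with $s+u<8$ and let $Q$ denote its underlying 2-coloured quadrangulation; then $Q$ has fewer than $8$ vertices. Theorem~\ref{theo:gomboc} supplies a sequence of uncoloured monotone splittings producing $Q$ from $P_2$. To finish, I would lift this sequence to the coloured setting: every graph along the chain is bipartite, its 2-colouring is determined up to a global swap, and the vertex inserted by each monotone splitting inherits a uniquely forced colour from bipartiteness. Identifying $P_2$ with its (essentially unique) 2-colouring as the \Gomboc's secondary class, the lifted chain exhibits $C$ as a descendant of the \Gomboc; any residual global colour swap is absorbed by the reflection symmetry between the \Gomboc's two singleton colour classes.

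The principal technical obstacle is exactly this lift from uncoloured to coloured splittings: one has to verify that each monotone splitting in the Theorem~\ref{theo:gomboc} sequence extends canonically to a coloured splitting consistent with the target 2-colouring of $Q$. This rests on the twin facts that monotone splittings preserve bipartiteness and that a connected bipartite spherical quadrangulation admits a 2-colouring unique up to a single global colour swap. Once the lift is in place, Theorem~\ref{theo:gomboc} delivers the positive half, while Theorems~\ref{theo:minpol} and~\ref{theo:ancestor} deliver the negative half, completing the outline.
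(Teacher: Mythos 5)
Your proposal matches the paper's argument: the negative half comes from the existence of irreducible quadrangulations beyond $P_2$ (the paper cites the infinitude of irreducible quadrangulations, equivalently \autoref{theo:minpol}; you exhibit the tetrahedron's radial graph), and the positive half is \autoref{theo:gomboc} lifted to the coloured setting via bipartiteness, exactly as the paper intends. The one detail you gloss over is that the \Gomboc's quasi-dual is $P_1$, not $P_2$, so the generating chain must begin with the auxiliary coloured splitting $C_0$ of \autoref{fig:gomboc} rather than by ``identifying'' $P_2$ with the \Gomboc's class; with that substitution your outline is essentially the paper's own proof.
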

  We say a polyhedron is a \emph{minimal polyhedron} if its every face contains one stable and its every vertex is an unstable equilibrium.
With respect to the monotone coloured splittings (and the original Columbus' algorithm), they are also ancestors:
\begin{coro}
  \label{coro:minpol}
  The secondary classes of the minimal polyhedra are irreducible so not generated from any other secondary class.
\end{coro}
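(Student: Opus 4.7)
My approach is to reduce Corollary \ref{coro:minpol} directly to Theorem \ref{theo:minpol} by identifying the 2-coloured quadrangulation that represents the secondary class of a minimal polyhedron with the radial graph of the polyhedron's 1-skeleton.

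The first step is to unpack the minimality condition geometrically. By definition, every vertex of the polyhedron $P$ carries one unstable equilibrium and every face of $P$ carries exactly one stable equilibrium; since $s+u-h=2$, there is then necessarily exactly one saddle per edge of $P$. I would analyse the Morse--Smale complex of the height function of (a rounding of) $P$ locally at each such saddle, and verify that the bijection introduced in the paper from 3-coloured Morse--Smale quadrangulations to 2-coloured quadrangulations assigns to each edge $e$ of $P$ exactly one quadrilateral face whose boundary cyclically alternates between the two face-centres adjacent to $e$ (stable) and the two endpoints of $e$ (unstable). This is precisely the definition of the radial graph $R(G)$ of the 1-skeleton $G$ of $P$: original vertices of $G$ map to unstable vertices, faces of $G$ map to stable vertices, and the incidence multiplicities and cyclic orderings match. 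The conclusion of this step is that the secondary-class quadrangulation of a minimal polyhedron equals $R(G)$.

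The second step is a one-line invocation of Theorem \ref{theo:minpol}, which asserts that $R(G)$ is irreducible for every polyhedral skeleton $G$ in the uncoloured sense, i.e.\ admits no 1- or 2-contraction. Since every colour-respecting contraction is in particular an uncoloured contraction, uncoloured irreducibility immediately implies irreducibility with respect to the monotone coloured splittings. Hence the secondary class of $P$ cannot arise as the output of such a splitting from any smaller secondary class, which is what is claimed.

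The principal obstacle will be the first step: one must make the local-to-global identification of the Morse--Smale complex of a smoothed minimal polyhedron with $R(G)$ fully rigorous, in particular tracking multiplicities (parallel edges of $G$ must produce matching parallel edges of $R(G)$ and matching parallel separatrices in the Morse--Smale picture) and the spherical embedding/cyclic ordering around each vertex. Once this geometric-to-combinatorial dictionary is in place, the rest of the corollary is essentially a citation of Theorem \ref{theo:minpol} together with the trivial observation that colour-preserving contractions form a subclass of uncoloured contractions.
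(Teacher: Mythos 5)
Your proposal is correct and follows essentially the same route as the paper: the paper likewise disposes of this corollary by observing that the quasi-dual of a minimal polyhedron is exactly the coloured radial graph of its skeleton and then citing Theorem~\ref{theo:minpol}, together with the definition that a secondary class is irreducible precisely when its underlying quadrangulation is. Your first step merely spells out the geometric identification that the paper states without proof (and your worry about parallel edges is moot, since polyhedral skeletons are simple and 3-connected by Steinitz's theorem), so there is no substantive difference in approach.
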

Finally, if we aim to generate a secondary class from every primary class, it is enough to use one of the two monotone coloured splittings:
\begin{coro}
  \label{coro:restr}
  It is enough to use either the coloured splitting $S_{1,1}$ or $S_{2,2}$ to generate every primary class from a finite starting set of secondary classes.
\end{coro}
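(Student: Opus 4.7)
The plan is to read off the action of each of $S_{1,1}$ and $S_{2,2}$ on the primary-class invariant $(s, u)$ and to show that, from an appropriate finite starting set, every primary class is reachable by iterating the chosen splitting alone. The key point is that each of these restricted monotone splittings introduces a single new vertex $w$ whose colour is determined by its anchor: for $S_{2,2}$, $w$ is joined to two diagonal corners of a $4$-face, which share a colour by bipartiteness, so $w$ receives the opposite colour; for the parallel $S_{1,1}$, $w$ is attached to a single anchor vertex and again receives the opposite colour. Hence by selecting either a stable or an unstable anchor, each step can apply either an ``$s\mapsto s+1$'' move or a ``$u\mapsto u+1$'' move.

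First I would verify that both moves are always available in any $2$-coloured quadrangulation with $s,u\geq 1$: for $S_{1,1}$ this is immediate, since vertices of either colour can serve as the anchor, while for $S_{2,2}$ it follows from a local inspection of $4$-face walks, using the fact that in a bipartite quadrangulation every non-trivial configuration admits a $4$-face with a diagonal of the desired colour.

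Next I would fix the starting set. A secondary class of primary type $\{1,1\}$ — the Gömböc's class discussed in \autoref{sec:geometry} — can in principle reach every $(s,u)$ with $s,u\geq 1$ by interleaving the two increment moves. If the Gömböc's class or any small class along the way does not itself admit the chosen splitting, a degeneracy that can occur only for very small $s+u$, we augment the starting set with the finitely many exceptional primary classes. The bound on $s+u$ for these exceptions is of the same flavour as the bound in \autoref{theo:gomboc} and \autoref{coro:irreduc}, ensuring that the starting set remains finite.

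The main obstacle is the base-case analysis: for each of the two choices of splitting, one must check in detail which very small primary classes admit the incrementing moves from some secondary class of $\{1,1\}$, and verify that any class that fails this test lies within a bounded range of $s+u$. The inductive step — that from a primary class $\{s,u\}$ admitting both increment moves one reaches $\{s+1,u\}$ and $\{s,u+1\}$ — is routine once the colour analysis above is in place; the real care lies in pinning down the exceptional ``seeds'' that must be included in the starting set, which is where the proof is separate for $S_{1,1}$ and for $S_{2,2}$.
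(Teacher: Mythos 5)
Your proposal follows essentially the same route as the paper: read off the colour of the new vertex under each restricted splitting, check that both the $s$-increment and the $u$-increment are available outside finitely many small classes, and absorb the exceptions into the starting set (the paper's seeds are $\{1,1\},\{1,2\},\{2,1\}$ for $S_{1,1}$ and the five singleton primary classes $\{1,1\},\{2,1\},\{3,1\},\{1,2\},\{1,3\}$ for $S_{2,2}$). The one point where your availability criterion for $S_{2,2}$ needs tightening is the degenerate face walks with repeated vertices: joining the new vertex to a ``diagonal'' whose corner vertices have degree 1 is really a 1-splitting, so the paper instead phrases the condition as \emph{both colour classes contain a vertex of degree at least 2 once $n>3$}, which is exactly what makes a 2-splitting of either colour applicable.
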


We mention that the coloured splittings are also a restricted subset of the operations called ``cancellations'' in computational geometry by \citet{Edelsbrunner2001} and \citet[Fig.~7.3]{Bremer2004}, who used them to simplify a multi-resolution mesh structure.
Domokos, Lángi and Szabó \cite{Domokos2012} showed recently that the geometric counterparts of the unrestricted coloured splittings generate the whole family of secondary classes, however, this is beyond the scope of this paper.

\subsection{Organization of this paper}

The paper is organized as follows.
Section~\ref*{sec:splitting} analyzes the vertex splitting and establishes its properties on the reachability of quadrangulations.
Section~\ref*{sec:irreducible} presents the hierarchy of irreducible quadrangulations: the reachability of small graphs, the polyhedral irreducible quadrangulations and the uniqueness of ancestors.
Section~\ref*{sec:geometry} interprets our results on the equilibrium topologies of convex bodies.
Finally we show some statistics on our data set in \autoref*{sec:stats}.

\section{The properties of vertex splitting} 
\label{sec:splitting}


In order to generalize the splitting for parallel case (with the notations illustrated on \autoref{fig:splitting}), 
let $\sigma(v)$ denote the cyclic ordering of the vertex $v$ in clock-wise order, so $\sigma(v)=\left(e_1,\ldots,e_{d(v)}\right)$. 
Let $n_i$ denote the other endpoint of the edge $e_i$.
\begin{defn}
A \emph{vertex splitting} on a quadrangulation $G$ is a graph operation transforming $G$ to $G'$, specified by a walk $n_1e_1ve_mn_m$ of $G$.
The vertex $v$ of $G$ is replaced by vertices $w$ and $v'$ in $G'$ dividing the edges of $v$.
The cyclic ordering of $w$ is $\sigma(w)=(e'_1,\ldots,e'_\sdeg)$, where $e'_i$ has the same other endpoint as $e_i$ had;
the cyclic ordering of $v'$ is $\sigma(v')=(e_1,e''_\sdeg,e_{\sdeg+1},\ldots,e_{d(v)})$, keeping some edges of the former vertex $v$, and $e''_{\sdeg}=e_\sdeg$ if $\sdeg>1$, otherwise $e''_\sdeg$ has the same other endpoint as $e'_\sdeg$ and $e_\sdeg$ had.
\end{defn}

Note that if $m>1$ and $n_1$ is identical to $n_\sdeg$ then $e_1$ and $e_\sdeg$ are distinct parallel edges.
The new face introduced by the splitting is bounded by the walk $v'e_1n_1e'_1we'_mn_me''_mv'$.
Observe that the degree of the new vertices are $d(w)=\sdeg$ and $d(v')=d(v)-\sdeg+2$.
In addition, the splitting specified by the walk $n_1e_1ve_\sdeg n_\sdeg$ creates the very same graph that the one specified by the walk $n_\sdeg e_\sdeg ve_1 n_1$.
Hence the \emph{degree of a splitting}, denoted by $D$, is invariant to reflection, and $D:=\min\{d(v'),d(w)\}=\min\{m,d(v)-m+2\}$.

Let $n$ denote the number of vertices of the quadrangulation.
There are exactly four quadrangulations such that $n\leq 4$, shown on \autoref{fig:smallests}, where $C_4$ is generated from $P_2$ with a 2-splitting, and $Q_3,Q_4$ are generated from $P_2$ with a 1-splitting.
For larger graphs, we have

\begin{prop}
\label{lem:mindeg}
For a quadrangulation such that $n>4$,
\begin{enumerate}[(i)]
\item 
  \label{lem:mindeg:deg1}
  if it has a vertex of degree 1, then it is 1-contractible even for $n=4$,
\item
  \label{lem:mindeg:deg2}
  if it has a vertex of degree 2, then it is 2-contractible,
\item
\label{lem:mindeg:degk}
  if its minimum degree is $k$, then it is $k$-contractible, and not $l$-contractible for any $l<k$.
\item
\label{lem:mindeg:irred}
  it is irreducible if and only if its minimum degree is 3,
\end{enumerate}
\end{prop}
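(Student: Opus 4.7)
For (i), the plan is to read off the length-4 boundary walk of the face containing a low-degree vertex and observe that this walk is exactly the output of the corresponding splitting. Let $u$ be a leaf with neighbour $x$ and let $F$ be the unique face containing $u$. Its length-4 walk must enter and leave $u$ along the single edge $ux$, so it has the shape $v'\to x\to u\to x\to v'$. The two ``$v'x$'' positions in this walk are either (Case B) two distinct parallel edges between $v'$ and $x$, which matches precisely the output of the $m=1$ splitting of Figure~\ref{fig:1splitting} and can be inverted by deleting $ux$ together with one of the parallel edges; or (Case A) the same edge traversed twice, in which case that edge is a planar bridge and a short degenerate-corner argument forces $d(v')=1$ and $d(x)=2$, whence $G=P_2$ and $n=3$. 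So for $n\geq 4$ only Case B occurs, giving 1-contractibility.

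For (ii), let $w$ have degree 2 with neighbours $n_1,n_2$. The same bridge/degeneracy argument shows that for $n>4$ the two corners of $w$ cannot share a single face, so they lie in distinct 4-faces whose boundary walks have the form $v'_j\to n_1\to w\to n_2\to v'_j$. Loop-freeness and the length-4 constraint force $v'_j\notin\{w,n_1,n_2\}$; this matches the 2-splitting of Figure~\ref{fig:msplitting}, and merging $w$ with $v'_1$ while deleting $wn_1,wn_2$ inverts it, with no loop arising since $w$'s only edges already lie inside $F_1$.

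For (iii), the negative direction is immediate: an $l$-contraction merges two vertices with $\min\{d(w),d(v')\}=l$, so it requires a vertex of degree exactly $l$, which minimum degree $k>l$ precludes. For the positive direction, take $w$ of degree $k$; when $k\geq 3$ each of $w$'s $k$ corners lies in its own 4-face with walk $v'\to n_i\to w\to n_{i+1}\to v'$ (a second appearance of $w$ in a length-4 walk would force $d(w)=2$), which is the generic output of a $k$-splitting, and one verifies that for $n>4$ at least one of these $k$ corners yields a $v'$ distinct from $w$ and from every neighbour of $w$, so that merging creates no loop. Part (iv) then follows directly: minimum degree $\leq 2$ gives a 1- or 2-contraction by (i)--(ii), while minimum degree $\geq 3$ precludes both by the negative part of (iii).

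The main obstacle is the positive part of (iii) for $k\geq 3$. In (i) and (ii) the low-degree vertex has so few incident edges that an almost exhaustive inspection of the local face walks suffices; for general $k$ there are $k$ candidate corners and one must simultaneously rule out all pathological configurations --- namely the opposite vertex $v'$ coinciding with $w$ or with some neighbour $n_j$ of $w$ (which would create a loop on merging whenever $j\neq i,i+1$). The hypothesis $n>4$ is what ultimately forces at least one corner to escape every degeneracy, with each pathological configuration reducing to a subgraph isomorphic to one of the small quadrangulations $P_2,C_4,Q_3,Q_4$.
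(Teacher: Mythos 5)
Your parts (i), (ii) and (iv) follow essentially the same route as the paper (read off the length-4 face walk at a low-degree vertex, identify the degenerate configurations with the small quadrangulations $P_2$, $Q_3$, and count degrees for the negative half of (iii)), and your treatment of (i) is in fact more explicit than the paper's one-line remark. But the positive half of (iii) for $k=3$ --- the only value not already covered by (i)--(ii), and the one Theorem~1 actually needs --- is not proved: you correctly flag it as ``the main obstacle'' and then dispose of it with ``one verifies that \dots at least one of these $k$ corners yields a $v'$ distinct from $w$ and from every neighbour of $w$.'' The case analysis you gesture at (``each pathological configuration reducing to a subgraph isomorphic to one of $P_2,C_4,Q_3,Q_4$'') would not go through as described: for instance, the configuration in which every corner fails because $v_i=n_{i+2}$ forces $\{w,n_1,n_2,n_3\}$ to span a $K_4$, which is not one of those graphs. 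The idea that is missing is one the paper states elsewhere: every spherical (multi)quadrangulation is bipartite. Hence the opposite corner $v'$ of any 4-face containing $w$ lies in the same colour class as $w$, while every neighbour $n_j$ lies in the other class, so $v'\neq n_j$ automatically, no loop can ever arise, and the only degeneracy left to exclude is $v'=w$. With that observation the positive direction of (iii) becomes a one-liner and your elaborate ``escape every degeneracy'' scheme is unnecessary.

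Two smaller inaccuracies in the same part. First, the parenthetical ``a second appearance of $w$ in a length-4 walk would force $d(w)=2$'' is false in general: a vertex of degree $\geq 3$ can appear at two opposite corners of a 4-face if it has a pendant neighbour reached along a doubly traversed edge (the degree-4 vertex of $Q_3$ does exactly this); under the minimum-degree-3 hypothesis the configuration is indeed impossible, but because it forces a degree-1 vertex, not because it forces $d(w)=2$. Second, in (ii) your condition $v'_j\notin\{w,n_1,n_2\}$ is not sufficient when $n_1=n_2$: the opposite vertex $v'_j$ may then be a pendant, in which case the contraction at that face has degree $\min\{d(w),d(v'_j)\}=1$ and is a 1-contraction rather than a 2-contraction. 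One must argue, as the paper does via its $Q_3$ case, that for $n>4$ the two corners of $w$ cannot both be of this degenerate type.
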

\begin{figure}
\centering
\subfigure[$P_2$]{
  \includegraphics{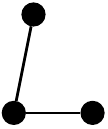}
}
\qquad
\subfigure[$C_4$]{
  \includegraphics{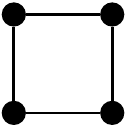}
}
\qquad
\subfigure[$Q_3$]{
  \includegraphics{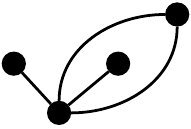}
}
\qquad
\subfigure[$Q_4$]{
  \includegraphics{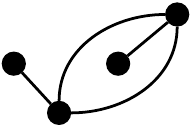}
}
\caption{The four smallest quadrangulations with $n\leq 4$.}
\label{fig:smallests}
\end{figure}
\begin{proof}
 Part (\ref*{lem:mindeg:deg1}) is a straightforward consequence of the quadrangulated faces.

For part (\ref*{lem:mindeg:deg2}), suppose a vertex $w$ is incident to two edges going to $n_1$ and $n_2$.
If $n_1$ and $n_2$ are distinct vertices, then the graph is 2-contractible as depicted on \autoref{fig:monotone}.
If $n_1=n_2$ and $n_1$ was not incident to any other parallel edge, the graph would be isomorphic to $Q_3$ (see \autoref{fig:smallests}).
If there are other parallel edges incident to $n_1$, then the graph is 2-contractible, the parallel case is applicable (see \autoref{fig:monotone}).

\begin{figure}
  \centering
  \includegraphics{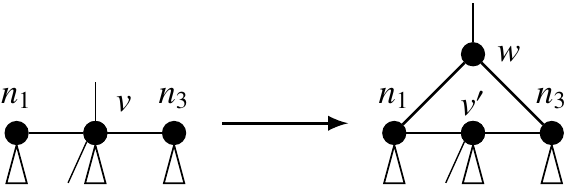}
\qquad
  \includegraphics{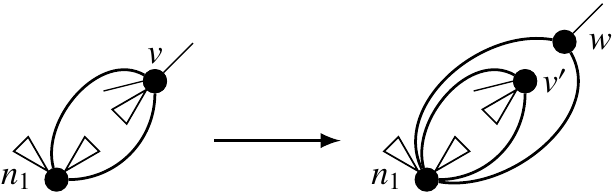}
  \caption{3-splitting. The half edge at $v$ denotes that there must be an edge otherwise it would be a 2-splitting.}
  \label{fig:3splitting}
\end{figure}

Part (\ref*{lem:mindeg:degk}) follows from the definition of the degree of splitting and from the observation that a $k$-splitting introduces a vertex of degree $k$.

Part (\ref*{lem:mindeg:irred}) follows from (\ref*{lem:mindeg:deg1})-(\ref*{lem:mindeg:degk}) as the monotone splittings introduce a vertex of degree 1 or 2.
\end{proof}

\autoref{lem:mindeg} provides the necessary observations to prove that the splittings $S_{1,3}$ generate $\mathscr{Q}$.

\begin{proof}[Proof of \autoref*{theo:splitting}]
The statements follows from \autoref{lem:mindeg} and from the fact that the minimum degree of a multiquadrangulation is either 1, 2 or 3 by Euler's formula.
\end{proof}

It is clear that the 1-splitting and the 2-splitting are monotone, and modify only one face of the graph in the vicinity of a point.
Hence the perturbation of the graph is minimal.
Monotonicity is also sufficient to characterize these splittings:

\begin{prop}
  \label{prop:monotone}
  Every plane graph operation defined on quadrangulations which adds one vertex and is monotone, is a restricted splitting $S_{1,2}$.
\end{prop}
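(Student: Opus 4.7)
The plan is to use Euler's formula as a bookkeeping tool to count the edges and faces introduced by the operation, and then classify the possibilities by the degree of the inserted vertex.

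First I would tally the counts. A spherical quadrangulation with $n$ vertices has $e=2n-4$ edges and $f=n-2$ faces, obtained by combining Euler's formula $n-e+f=2$ with the face-length relation $2e=4f$. Applying this to $G$ and $G'$, the operation must add exactly two edges and create exactly one new face; equivalently, it splits a single face of $G$ into two faces of $G'$. Let $w$ denote the new vertex. Since every edge incident to $w$ in $G'$ is necessarily new, $d(w)\le 2$, and a quick Euler check rules out the isolated case $d(w)=0$, so $d(w)\in\{1,2\}$.

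By monotonicity, $w$ lies in the interior of some face $F$ of $G$, and because new edges cannot cross old edges, every new edge incident to $w$ terminates at a vertex on $\partial F$. I would then argue that the whole modification is localized inside $F$: a lone new chord drawn inside a different face $F^{\ast}$ would split $F^{\ast}$ into sub-faces of lengths $\ell+1$ and $5-\ell$, and no choice of the arc-length $\ell$ makes both equal to $4$. Hence the second new edge must also lie inside $F$.

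I would then split into two cases by $d(w)$. If $d(w)=2$, both new edges go from $w$ to two positions $a,b$ on $\partial F$ and the two resulting sub-faces of $F$ have lengths $\ell+2$ and $6-\ell$; requiring both equal to $4$ forces $\ell=2$, so $a$ and $b$ are diametrically opposite along $\partial F$. This is precisely the $2$-splitting, covering the simple case ($a\ne b$) and the parallel case ($a$ and $b$ correspond to the same vertex appearing twice along $\partial F$). If $d(w)=1$, let $e_w=wn$ be the pendant edge and $e_c=ab$ the other new edge; their combined effect on $F$ yields sub-face lengths $\ell_1+3$ (on the pendant side) and $\ell_2+1$, with $\ell_1+\ell_2=4$, and demanding both to equal $4$ forces $\ell_1=1,\ \ell_2=3$. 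So $a$ and $b$ are adjacent along $\partial F$, $e_c$ runs parallel to the shared $\partial F$-edge, and $n\in\{a,b\}$, which is precisely the $1$-splitting.

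The main subtlety, in my view, is treating $\partial F$ consistently as a closed walk of length $4$ that may revisit a vertex, rather than as a simple $4$-cycle. This is exactly where the parallel variants of $S_{1,2}$ hide. Phrasing the arc-length arithmetic in terms of positions along the boundary walk rather than distinct vertices automatically subsumes these degenerate cases and avoids any spurious contradiction when two chord-endpoints coincide as vertices but not as walk positions.
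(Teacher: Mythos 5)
Your proof is correct and follows essentially the same route as the paper's: an Euler-formula count showing exactly two edges and one face are added, followed by a case analysis of how a single quadrilateral face can be subdivided into two quadrilaterals by one new vertex and two new edges. Your version is somewhat more explicit than the paper's (the localization of both new edges to one face, the bound $d(w)\le 2$, and the treatment of the boundary as a closed walk rather than a $4$-cycle are all left implicit in the published argument), but the underlying idea is identical.
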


\begin{proof}
  Consider any operation satisfying the conditions, putting the new vertex $w$ into an original quadrilateral face bounded by the walk $x_1x_2x_3x_4x_1$, listing only the vertices of the walk.
  Note that some of the vertices $x_i$ may coincide.
  By Euler's formula, we need to add two edges in addition to the new vertex, so connect, without loss of generality, $w$ and $x_1$ with an edge.
  Then, we have an ``almost quadrangulation'': except for one face which is bounded by the walk $x_1wx_1x_2x_3x_4x_1$ of length 6.
  This face can be divided into two quadrilateral faces by adding one edge in three ways
: connecting $x_1$ and $x_2$ (the fourth and the seventh elements of the walk sequence), or $x_1$ and $x_4$ (third and sixth elements), or $w$ and $x_3$ (second and fifth) with an edge.
  The first and the second case is applying a 1-splitting, the third case is applying a 2-splitting.
\end{proof}

\section{Irreducible quadrangulations}
\label{sec:irreducible}

As being irreducible for $n>3$ is equivalent to having minimum degree of 3, one can easily verify that e.g.\  the radial graph of the skeleton of the tetrahedron is irreducible (see \autoref{fig:tetrahedron}).
\begin{figure}
\centering
  \subfigure[Radial graph (left) and the skeleton (right) of the tetrahedron]{
    \label{fig:tetrahedron}
    \includegraphics{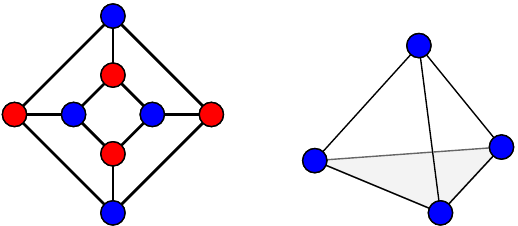}
  }
\qquad\qquad\qquad
  \subfigure[Radial graph (left) and the skeleton (right) of the square pyramid]{
    \label{fig:sqpyramid}
    \includegraphics{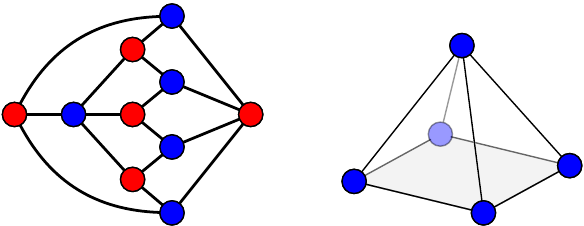}
  }
\caption{Irreducible quadrangulations.}
\label{fig:irreducible}
\end{figure}
According to \autoref{theo:gomboc}, it is the smallest irreducible quadrangulation after $P_2$.
In preparation of proving it, we need to generalize slightly an earlier result:
\begin{prop}[\citet{Batagelj1989}]
  \label{prop:Batagelj}
  Every simple 3-connected quadrangulation has at least 8 vertices of degree 3.
\end{prop}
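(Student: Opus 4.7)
The plan is a standard double-counting argument using Euler's formula. Since every face is bounded by a closed walk of length 4, and each edge contributes exactly twice to the total length of all face boundary walks, one has $4f = 2e$, i.e.\ $e = 2f$. Combining with Euler's formula $n - e + f = 2$ yields $e = 2n-4$, and hence
\[
\sum_{v} d(v) \;=\; 2e \;=\; 4n - 8.
\]

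Let $n_k$ denote the number of vertices of degree exactly $k$. Then $\sum_k n_k = n$ and $\sum_k k\,n_k = 4n-8$, so by subtraction
\[
\sum_{k} (4-k)\, n_k \;=\; 4n - (4n - 8) \;=\; 8.
\]
A simple 3-connected quadrangulation has minimum degree at least 3, so only terms with $k \geq 3$ appear. The $k=3$ term contributes $+n_3$, the $k=4$ term contributes $0$, and every term with $k \geq 5$ is non-positive. Rearranging gives
\[
n_3 \;=\; 8 + \sum_{k \geq 5} (k-4)\, n_k \;\geq\; 8,
\]
which is the desired conclusion.

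There is no real obstacle here; the only thing to check carefully, in view of the authors' stated wish to \emph{slightly generalize} Batagelj's statement, is that the identities $n-e+f=2$ and $2e=4f$ remain valid for multiquadrangulations whose face boundary walks may repeat edges or vertices. The first is Euler's formula for any connected plane multigraph, and the second holds because every edge is still incident to two face-sides, whether the two sides belong to distinct faces or to a single face traversed twice by its boundary walk. Consequently the same count gives $n_3 \geq 8$ under the weaker hypothesis that the (multi)quadrangulation has minimum degree 3, which is precisely the form needed to combine with \autoref{lem:mindeg}\,(\ref*{lem:mindeg:irred}) in the characterisation of irreducible ancestors.
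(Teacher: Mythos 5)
Your proof is correct and follows essentially the same route as the paper's: both derive $\sum_v d(v)=4n-8$ from Euler's formula and the quadrangulation face count, then compare against the lower bound forced by minimum degree $3$ to conclude $n_3\geq 8$. Your remark that the count survives for multiquadrangulations with minimum degree $3$ is exactly the generalization the paper records separately as its Proposition on quadrangulations with minimum degree 3.
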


The generalization of Batagelj's proof for our need is straightforward as his statement still holds for multigraphs and instead of 3-connectivity it is enough to assume the minimum degree is 3.
The latter observation has already been made in \cite{Brinkmann2005}.
Hence we use the following statement:

\begin{prop}
  \label{prop:Batageljgen}
  Every quadrangulation with minimum degree 3 has at least 8 vertices of degree 3.
\end{prop}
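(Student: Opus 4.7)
The plan is to prove this by a direct Euler-formula counting argument, with no connectivity hypothesis needed. Since every face is a 4-walk and we allow multi-edges, the standard face-edge incidence still gives $4f = 2e$, so $e = 2f$, and combining with $n - e + f = 2$ yields $e = 2n-4$. This step is purely combinatorial and valid for multiquadrangulations because each edge still lies on the boundary of exactly two faces (counted with multiplicity when an edge borders the same face twice).

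Next I would convert this into a statement about vertex degrees. Let $n_k$ denote the number of vertices of degree $k$. The handshake lemma gives $\sum_k k\,n_k = 2e = 4n - 8$, and by assumption $n_k = 0$ for $k \leq 2$, with $\sum_k n_k = n$. The trick is to form the linear combination
\[
\sum_{k\geq 3} (4-k)\,n_k \;=\; 4\sum_{k\geq 3} n_k \;-\; \sum_{k\geq 3} k\,n_k \;=\; 4n - (4n-8) \;=\; 8.
\]
Since $(4-k) \leq 0$ for $k \geq 4$, only the $k=3$ term contributes positively, giving
\[
n_3 \;=\; 8 \;+\; \sum_{k\geq 5} (k-4)\,n_k \;\geq\; 8.
\]
This immediately yields the desired bound.

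There is essentially no obstacle: the whole argument is a one-line discharging of Euler's formula. The only subtlety worth flagging in the write-up is that Batagelj's original proof assumed simplicity and 3-connectivity, but those hypotheses are not used anywhere in the counting above, so the extension to multiquadrangulations with minimum degree 3 (as already remarked by Brinkmann et al.) is automatic. I would include one brief sentence observing that $4f = 2e$ remains valid for multigraphs under the convention that an edge bordering the same face twice contributes twice to that face's boundary walk, which is exactly the convention used in the definition of a multiquadrangulation adopted earlier in the paper.
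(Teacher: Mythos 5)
Your proof is correct and follows essentially the same route as the paper's: both derive $\sum_v d(v) = 4n-8$ from Euler's formula together with the face--edge incidence count, and then use the minimum-degree-3 hypothesis to force at least $8$ vertices of degree $3$ (the paper phrases the final step as $3k+4(n-k)\leq 4n-8$ rather than as a discharging sum, but the two are identical). Your extra remark that an edge bordering a face twice is counted twice in the boundary walk is a worthwhile clarification that the paper leaves implicit.
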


\begin{proof}
The layout of this proof is similar to Batagelj's proof, but the conditions are relaxed.
From Euler's formula we have that the sum of the degrees of a quadrangulation of size $n:=s+u$ is $4n-8$.
If it has $k$ vertices of degree 3 and $n-k$ other vertices of degree at least 4, then the sum of the degrees is also at least $3k+4(n-k)$, implying $k\geq 8$.
\end{proof}

\begin{proof}[Proof of \autoref*{theo:gomboc}]
  By \autoref{lem:mindeg} and \autoref{prop:Batageljgen}, we have that the only irreducible ancestor such that $n<8$ is $P_2$.
So the theorem can be proved by induction.
\end{proof}

The data set generated by our program has strengthened \autoref{theo:gomboc}, as we could count easily the irreducible ones for $n\leq 10$ by \autoref{lem:mindeg}.
The statistics showed that there are exactly three irreducible graphs among them: $P_2$ and the radial graphs of the two polyhedral skeletons shown on \autoref{fig:irreducible}.
\autoref{theo:minpol} states that the radial graphs of polyhedral skeletons are all irreducible.
Moreover, it gives an inductive definition for them, as they cannot be generated by monotone splittings.

\begin{proof}[Proof of \autoref*{theo:minpol}]
Any polyhedral skeleton is simple and 3-connected by Steinitz' theorem \cite{Mohar2001}.
Let $\mathscr{Q}_4$ denote the family of simple and 3-connected quadrangulations having no separating 4-cycles.
It is known \cite{Brinkmann2005} that a graph is simple and 3-connected if and only if its radial graph belongs to $\mathscr{Q}_4$, proving the first part of the theorem.

It was also shown \cite{Brinkmann2005} that $\mathscr{Q}_4$ is generated by the restricted splitting $S_{3,3}$ starting from the pseudo-double wheels.
A \emph{pseudo-double wheel} is a cycle of even length, with its inner and outer face subdivided by a vertex, such that the inner vertex is adjacent to the odd-numbered vertices of the cycle, the outer vertex is adjacent to the even-numbered vertices of the cycle \cite{Brinkmann2007}.
It is easy to verify that a pseudo-double wheel of size $n=2k+2$ is the radial graph of the skeleton of a pyramid with a $k$-sided base which completes the proof.
\end{proof}

Now we show that the irreducible ancestor of a quadrangulation is unambiguously defined, so a quadrangulation cannot be generated from different irreducible ancestors at the same time.
Consequently, for any two different irreducible quadrangulations $A_1\neq A_2$ the families generated starting from $\{A_1\}$ and from $\{A_2\}$ are disjoint, thus the set of irreducible quadrangulations leads to a natural partition of $\mathscr{Q}$.

\begin{proof}[Proof of \autoref*{theo:ancestor}]
Without loss of generality, suppose $A_1$ is not isomorphic to $P_2$ thus the minimum degree of $A_1$ is 3.
Let $G_0,G_1,\ldots,G_p$ be the series of graphs yielded by the generation process of $G$ from $A_1$ where $G_0= A_1$ and $G_p= G$, and, indexed in reverse order, $G_{p+q},\ldots,G_{p+1},G_p$ the graphs yielded by the generation process from $A_2$ where $G_{p+q}= A_2$, for some $p,q>0$.
We will prove that the edges of $G_0$ remain intact despite applying any monotone vertex splitting or monotone face contraction,
i.e.~$G_0$ is the embedded subgraph of every $G_k$ for all $0\leq k \leq p+q$, so $A_1$ is the subgraph of $A_2$, and vice versa.

We say an edge of a graph $G_i$ is \emph{ancient} if it is also in $G_0$.
By induction, assume that $G_0$ is the embedded subgraph of the graphs $G_0,\ldots,G_{k-1}$ for some $k$, which trivially holds for $k=1$.
If $G_k$ is created from $G_{k-1}$ by a monotone splitting, 
clearly no edge is removed due to the definition of monotonicity.
If $G_k$ is created from $G_{k-1}$ by a 2-contraction, then we use a basic property of the 2-contraction that if it removes an edge, then one endpoint of the edge is also removed and the degree of the removed endpoint is 2.
However, according to the induction hypothesis, each ancient edge of $G_{k-1}$ has endpoints with degree at least 3 thus no 2-contraction is applicable to $G_{k-1}$ to remove an ancient edge.

The last case is where $G_k$ is created from $G_{k-1}$ by a 1-contraction, removing two edges: $e'_1$ and $e''_1$ (see the definition of vertex splitting and \autoref{fig:1splitting}).
Edge $e'_1$ cannot be ancient because of having and endpoint of degree 1 as explained in the previous case.
If edge $e''_1$ is ancient and $e_1$ is not, then again no ancient edge is removed as it is only a technicality if we actually remove the edge $e''_1$ or $e_1$ because of the symmetry of the 1-contraction.
If both edges $e''_1$ and $e_1$ would be ancient and $e'_1$ would not be, then, because of the induction hypothesis, there would be a face in $G_0$ bounded by the walk $v'e_1n_1e''_1v'$ contradicting the assumption that $G_0$ is a quadrangulation.
Thus a 1-contraction cannot remove an ancient edge either.

Consequently, as neither the splittings nor the contractions alter the embedding of the rest of the graph, $G_0$ remains the embedded subgraph of $G_k$ for all $k$.
\end{proof}

\section{Generating secondary equilibrium classes}
\label{sec:geometry}

In this section we interpret our results on generating secondary equilibrium classes as coloured surface graphs of convex bodies.
Some of these observations were already outlined in the conference version of this paper \cite{Kapolnai2011}.

The idea of representing a 3D body with some surface graph appears in multiple disciplines.
Most famous are the polyhedral graphs that are the skeletons of the convex polyhedra, characterized by Steinitz's theorem~\cite{Mohar2001}.
Another related appearance is the mesh generation of physical shapes in computational geometry, when some Morse--Smale complex of a body is drawn as a surface graph, see e.g.\  \cite{Dong2006}.
However, the underlying function used for meshing is not necessarily the height function $R$, thus the nodes of the mesh and the equilibrium points do not necessarily coincide.

\subsection{Geometric interpretation of quadrangulations and vertex splittings}

To determine the secondary class of a body, we need to introduce some concepts from Morse theory~\cite{Arnold1978,Edelsbrunner2001}, illustrated on~\autoref{fig:horbit}.
In generic case, we say a path on the surface is a \emph{heteroclinic orbit}, if its tangent vectors agree with the gradient vectors of the height function $R$, and its endpoints are two equilibria of different type (see \autoref{fig:ellorbits} and \ref{fig:ellorb}).
It is known the heteroclinic orbits incident to a saddle point are isolated on the surface, and there are only a finite number of them.

These isolated orbits divide the body surface into quadrilateral cells (see \autoref{fig:msgraph}), and in each cell an infinite number of non-isolated, heteroclinic orbits are going from the unstable to the stable point which we disregard for now.
In this way, the body surface defines a vertex-coloured multigraph embedded on the sphere, where the vertices are the equilibria, the edges are the isolated heteroclinic orbits connecting saddle and non-saddle vertices, the faces are the quadrilateral cells, and the colour of a vertex gives its type of equilibrium.

This graph is referred as the \emph{Morse--Smale graph}, which means possessing three properties~\cite{Arnold1978,Edelsbrunner2001}.
(1) The graph is a \emph{quadrangulation} of the plane unless the body is in the primary class of the \Gomboc.
(2) Every quadrilateral boundary walk is a sequence of a saddle, a stable, a saddle and an unstable vertex.
(3) The degree of every saddle vertex is 4.

\begin{figure}
  \centering
  \subfigure[Morse--Smale graph (isolated orbits)]{
    \begin{minipage}{0.27\linewidth}\centering
      \label{fig:msgraph}
      \includegraphics{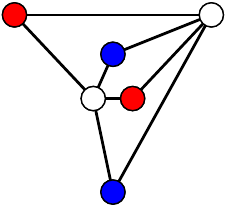}\vspace{0.5em}
    \end{minipage}
  }
  \subfigure[Triangulation]{
    \begin{minipage}{0.27\linewidth}\centering
      \label{fig:trigraph}
      \includegraphics{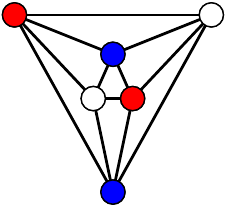}\vspace{0.5em}
    \end{minipage}
  }
  \subfigure[Quasi-dual]{
    \begin{minipage}{0.27\linewidth}\centering
      \label{fig:topgraph}
      \includegraphics{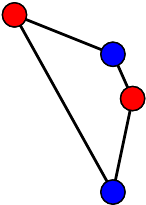}\vspace{0.5em}
    \end{minipage}
  }
  \caption{Determining the secondary class of the ellipsoid.
  }
  \label{fig:subclass}
\end{figure}

We perform two invertible transformations on the Morse--Smale graph, to yield a more compact graph which is still a genuine representation of the geometry.
First, we connect the stable and unstable points in each quadrangulated face, creating a triangulation (see \autoref{fig:trigraph}).
The geometric interpretation of this operation could be that from each cell, we pick one orbit arbitrarily from the infinite set of unstable-stable orbits.
Then we remove the saddle points and the edges incident to them, creating another quadrangulation leaving only the edges created in the previous step (see \autoref{fig:topgraph}).
This idea also appears in \cite{Dong2006} as the \emph{quasi-dual} of the Morse--Smale complex, however, as their goal is to simplify of the mesh structure, they do not mention that it can be shown that these transformations are invertible, so we lost no information by removing the saddles.
We use the resulting quasi-dual graph to define the secondary class of the body, summarized in the following

\begin{defn}
  A \emph{quasi-dual} is a 2-coloured multiquadrangulation, where the independent sets are called the stable and the unstable equilibria, of size $s$ and $u$, respectively, if $s+u>2$.
  A \emph{quasi-dual of a body} is the quasi-dual obtained with the process described above.
  For technicality, in case $s+u=2$, we define the quasi-dual of the \Gomboc as the path $P_1$ of length 1 (with one edge) connecting a stable and an unstable equilibrium (see \autoref{fig:gomboc}).
  Obviously $n=s+u$.
  A \emph{secondary equilibrium class} is an isomorphism class of quasi-duals, where the isomorphism is expected to preserve the colouring as well.
\end{defn}

\begin{figure}
  \centering
  \includegraphics{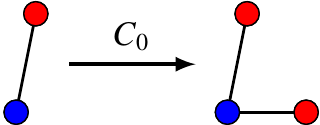}
  \qquad
  \includegraphics{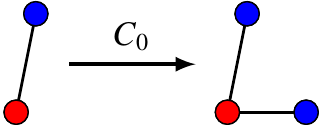}
  \caption{Columbus' algorithm: auxiliary coloured splitting $C_0$ defined only on the quasi-dual $P_1$ of the \Gomboc.}
  \label{fig:gomboc}
\end{figure}

While the derivation process above requires the existence of the gradient field of the height function, we would also like to consider some reasonable body surfaces with no gradient, e.g.\  a polyhedra.
As the construction of the Morse--Smale complex is already extended for some non-smooth functions as well~\cite{Edelsbrunner2001}, we believe the definition of the quasi-dual and the secondary class of a polyhedron could be understood by intuition and needs no rigorous theoretical background.

If a quadrangulation admits multiple non-isomorphic colourings, then its possible secondary classes form a partition of its map. 
It is easy to verify that a quadrangulation admits either two non-isomorphic colourings, i.e.\  switching the colours results in another secondary class, or only one colouring.
In the latter case we call the secondary class a \emph{self-dual}.

Assigning a quasi-dual to a convex body is a generalization of assigning a radial graph to a polyhedral skeleton.
Radial graphs are all quadrangulations, and every quadrangulation is the radial graph of some spherical surface graph~\cite{Mohar2001, Brinkmann2005}.
Moreover, assigning to a generic connected plane graph $G$ its 2-coloured radial graph $R(G)$ and colouring the images of the vertices of $G$ ``unstable'' in $R(G)$, and the images of the faces of $G$ ``stable'' defines a bijection \cite{Fusy2007}.
For a fixed $s,u$, the bijection is between the set of secondary classes in the primary class $\{s,u\}$ and the set of maps of generic connected plane graphs with $u$ vertices, $s$ faces and $h=s+u-2$ edges.
E.g.\  in the case of the \emph{minimal polyhedra}, the quasi-dual is actually the coloured radial graph of the skeleton, thus the Poincaré--Hopf theorem can be replaced by Euler's formula.

The \emph{steps of Columbus' algorithm} are defined and their geometric feasibility were proved in \cite{Domokos2006}, where feasibility means they transform any convex body to another convex one.
Observing the change of the quasi-dual of the underlying body, the steps of Columbus' algorithm are combinatorial operators on quasi-duals we call the \emph{coloured splittings}.
A coloured splitting of a quasi-dual consists of a vertex splitting of the underlying quadrangulation followed by the proper colouring of the introduced vertex.
We add that the coloured splittings corresponding to the original steps of Columbus' algorithm are all monotone.
So each coloured splitting corresponds to two dual versions of a splitting of the quadrangulation: to one adding a new stable vertex, and to one adding a new unstable vertex.

For technicality, we add an auxiliary coloured splitting $C_0$ applicable only on the quasi-dual $P_1$ of the \Gomboc, which is not literally a vertex splitting of a quadrangulation.
Applying $C_0$ either results in the quasi-dual in the class $\{1,2\}$, or in the quasi-dual in the class $\{2,1\}$, see \autoref{fig:gomboc}.
So the definition of coloured splitting is relaxed to allow $C_0$ as well, considered a 1-splitting (i.e.\  $S_{1,1}$).

\subsection{Generating secondary classes by monotone coloured splittings}

While the monotone coloured splittings generate every primary class from $P_1$, they do not generate every secondary class, so they admit a nontrivial family of irreducible ancestors.
We say a secondary class is \emph{irreducible} if it cannot be created with a monotone coloured splitting.
In other words its underlying quadrangulation with $n>3$ is irreducible, or $n=2$.
So the secondary classes in $\{1,2\}$, $\{2,1\}$ are not considered irreducible (see \autoref{fig:gomboc}), but $P_1$ in $\{1,1\}$ is.
By \autoref{lem:mindeg}, we can characterise the irreducible secondary classes for $n>3$ as bodies with a quasi-dual of minimum degree 3.

The results of this subsection are outlined in \autoref{fig:suh-outline} which we reveal gradually. 
Let $\mathcal{G}$ denote the starting set consisting of the secondary class of the \Gomboc, i.e.\  $\mathcal{G}=\{P_1\}$, and \col the family generated from $P_1$ by monotone coloured splittings.
Throughout this section, for any starting set of secondary classes $X$, let $X^+$ denote the family of secondary classes generated from $X$ by the monotone coloured splittings.
So let $\mathcal{I}$ denote the family of irreducible secondary classes, then \suh denotes the family of all possible secondary classes.

\begin{figure}
  \centering
  \includegraphics{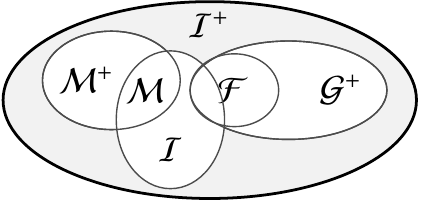}
  \caption{Hierarchy of secondary classes.
    Notation: \suh: all secondary classes,
    \col: generated from $P_1$,
    $\mathcal{F}$: small secondary classes s.t.~$s+u<8$,
    $\mathcal{I}$: irreducible ancestors,
    $\mathcal{M}$: minimal polyhedra,
    $\mathcal{M}^+$: generated from minimal polyhedra.}
  \label{fig:suh-outline}
\end{figure}

The first part of \autoref{coro:irreduc} states that $\mathcal{I}$ contains secondary classes besides $P_1$, which clearly follows from the fact that there is an infinite set of non-isomorphic irreducible quadrangulations.
For the second part, let $\mathcal{F}$ denote the family of secondary classes in the primary classes $\{s,u\}$ such that $s+u<8$.
The second part states that $\mathcal{F}\subset\colm$, which follows from \autoref{theo:gomboc} with the help of the auxiliary splitting $C_0$. 

Note that $\colm\cap\mathcal{I}$ contains only $P_1$ by definition.
\autoref{coro:minpol} gives a geometric characterisation of a special family within $\mathcal{I}$ called the minimal polyhedra.
Recall that a polyhedron is a \emph{minimal polyhedron} if its every face contains one stable and its every vertex is an unstable equilibrium, and let $\mathcal{M}$ denote the family of their secondary classes.
E.g.\  the Platonic solids or a right prism are all minimal polyhedra, however, an oblique prism may not be a minimal polyhedron: there may be a face which does
not contain a stable point.
\autoref{coro:minpol} states that $\mathcal{M}\subset\mathcal{I}$  (see \autoref{fig:suh-outline}).
As there are only three irreducible secondary classes until $s+u\leq 10$ according to our data set, we can safely say that every secondary class such that $s+u\leq 10$ are generated either from the \Gomboc or from a minimal polyhedron, i.e.\  $\{S\in\suhm:s+u\leq 10\}\subset\colm\cup\mathcal{M}^+$.

Nevertheless, there are other irreducible secondary classes as well, e.g.\  \autoref{fig:parallelirred} shows a method to enumerate some of them with parallel edges.
Clearly a minimal polyhedron cannot have parallel edges.

\begin{figure}
\centering
\includegraphics{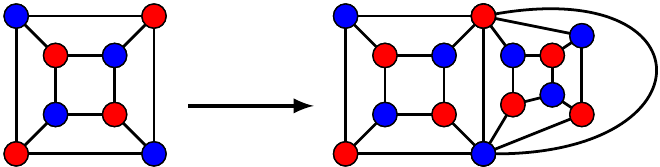}
\caption{Irreducible subclass with parallel edges created by reflecting a simple one.}
\label{fig:parallelirred}
\end{figure}

Finally, \autoref{theo:ancestor} states that the irreducible ancestor of a secondary class is unique, because the quadrangulations of the irreducible secondary classes and the irreducible quadrangulations coincide for $n>3$.
Consequently, e.g.\  \col and $\mathcal{M}^+$ are disjoint.
\autoref{theo:ancestor} also results in an algorithm to determine the irreducible ancestor of a body:
we simply need to apply monotone contractions iteratively to its quasi-dual, until we get the irreducible ancestor.
This algorithm is trivial because we can choose the contractions arbitrarily: the resulting irreducible ancestor is unique hence is independent of the actual choices, as any contraction sequence would result in the same ancestor.

\subsection{Generating primary classes by restricted splittings}

The restricted coloured splittings $S_{1,2}$ (i.e.\  the monotone ones) can generate every primary class from the \Gomboc \cite{Domokos2006}, but not every secondary class by \autoref{coro:irreduc}.
We consider further restrictions in this subsection summarized by \autoref{coro:restr}.
In detail, the coloured splitting $S_{1,1}$ generates every primary class from $P_1$, and the coloured splitting $S_{2,2}$ generates every primary class from the starting set of the secondary classes of the singleton primary classes $\left\{\{1,1\},\{2,1\}, \{3,1\},\{1,2\},\{1,3\}\right\}$.
Note that according to \autoref{tab:cardinalities} in \autoref{sec:stats} these are the only singleton primary classes.

\begin{proof}[Proof of \autoref*{coro:restr}]
First we consider restricting the monotone coloured splittings to $S_{1,1}$.
As 1-splitting is always applicable at any part of the graph if $n>2$, any primary class can be reached with coloured 1-splittings from $\{1,2\}$ or from $\{2,1\}$.
So the primary classes are generated from the starting set $\{\{1,1\},\{1,2\},\{2,1\}\}$, and this set can be generated from $P_1$ by $C_0$.

Now we consider restricting the monotone coloured splittings to $S_{2,2}$.
The quadrangulations of the classes $\{2,1\}$, $\{3,1\}$, $\{1,2\}$, $\{1,3\}$ are 2-irreducible (see \autoref{fig:smallests}), so they has to be in the starting set.
The 2-splitting is always applicable around a vertex $v$ with $d(v)\geq 2$ introducing a vertex $w$ of the same colour as $v$.
It is clear that if $n>3$, both independent sets of a bipartite quadrangulation contains vertices of degree at least 2.
As $C_4$ in $\{2,2\}$ is 2-contractible, the 2-splittings can be arbitrarily combined to generate some secondary class in any primary class $\{s,u\}$ such that $s+u>4$.
\end{proof}

In the next section we show statistics on the number of 1-irreducible and 2-irreducible secondary classes for a limited size in \autoref{tab:restricted}.

\section{Computational results}
\label{sec:stats}

This section presents some statistics on the data set attained by the computer program.
As \plantri supports dividing the computation into independent parts, we could perform this computation in parallel in a grid infrastructure using the Saleve framework~\cite{Dobe2010}.
Some of the numbers have been already presented in \cite{Kapolnai2011}.

\autoref{tab:cardinalities} shows the cardinalities of the classes $\{s,u\}$ with $s+u\leq 10$.
The table is symmetric because a subclass such that $s\neq u$ is clearly not self-dual.
These numbers were already published as the cardinalities of the 
 \emph{unrooted and unsensed maps} by \citet{Wormald1981, Walsh1983, Walsh2012}.

\autoref{tab:q} shows the number of multiquadrangulations $\quads(n)$, which is the first publication of these numbers, up to our best knowledge.
To compute $\quads(n)$ from the same program-generated data set, 
observe a relation between the number of secondary classes in the class $\{s,u\}$, the number of self-dual secondary classes and the number of quadrangulations of size $n=s+u$ and, denoted respectively by $\maps(s,u)$, $\sdquads(s,u)$, $\quads(n)$:
\begin{equation}
\label{eq:card}
2\quads(n)-\sdquads\left(n/2,n/2\right)=\sum_{s=1}^{n-1}\maps(s,n-s).
\end{equation}

\autoref{tab:restricted} shows the number of ancestor secondary classes with respect to splittings with different restriction criteria: $S_{2,2}$, $S_{1,1}$ and $S_{1,2}$.

\begin{table}[p]
  \centering
  \begin{tabular}{rrrrrrrrrr}
    \hline
    {\ }  & $s=1$
    & $s=2$ & $s=3$ & $s=4$ & $s=5$ & $s=6$ & $s=7$ & $s=8$ & $s=9$ \\
    \hline
    $u=1$& {1} & \nfour 1   & 1    & \neight 2    & 3   & \ntwelve 6    & 12  & \nsteen 27 & 65\\
    $u=2$&\nfour 1  & 2   & \neight 5    & 13   & \ntwelve 35  & 104  & \nsteen 315 &1021\\
    $u=3$& 1  & \neight 5   & 20 & \ntwelve 83  & 340 & \nsteen 1401 &5809\\
    $u=4$&\neight 2  & 13  & \ntwelve 83   & 504 & \nsteen 2843 &15578\\
    $u=5$& 3  & \ntwelve 35  & 340  & \nsteen 2843 &21420\\
    $u=6$&\ntwelve 6  & 104 & \nsteen 1401 &15578\\
    $u=7$& 12 & \nsteen 315 &5809\\
    $u=8$&\nsteen 27 &1021\\
    $u=9$ &65\\
    \hline
  \end{tabular}
  \caption{Cardinalities $\maps(s,u)$ of the equilibrium classes.}
  \label{tab:cardinalities}
\end{table}

\begin{table}
  \centering
  \begin{tabular}{lrrrrr}
    \hline
    & $\quads(n)$ & $\sdquads(n)$ & $\sum_s\maps(s,n-s)$\\
    \hline
    $n=3$  &     1 &     - &     2\\
    $n=4$  &     3 &     2 &     4\\
    $n=5$  &     7 &     - &    14\\
    $n=6$  &    30 &     8 &    52\\
    $n=7$  &   124 &     - &   248\\
    $n=8$  &   733 &    50 &  1416\\
    $n=9$  &  4586 &     - &  9172\\
    $n=10$ & 33373 &   380 & 66366\\
    \hline
  \end{tabular}
  \caption{The number of multiquadrangulations ($\quads$), self-dual secondary classes ($\sdquads$) and secondary classes ($\sum e$).}
  \label{tab:q}
\end{table}

\begin{table}
  \centering
    \begin{tabular}{lrrrr}
      \hline
       & 1-2-contractible & 2-irreducible & 1-irreducible & irreducible\\
      \hline
      $n=4$  &     0  &   3 &     1 & 0 \\
      $n=5$  &     6  &   2 &     6 & 0 \\
      $n=6$  &    32  &   4 &    16 & 0 \\
      $n=7$  &   172  &  10 &    66 & 0 \\
      $n=8$  &  1071  & 33  &   311 & {1}\\
      $n=9$  &  7370  & 114 &  1688 & 0 \\
      $n=10$ &  55766 & 474 & 10125 & {1}\\
      \hline
    \end{tabular}
  \caption{Ancestor secondary classes with respect to $S_{1,1}$ (third column), $S_{2,2}$ (second column), $S_{1,2}$ (fourth column) and non-ancestors (first column).}
  \label{tab:restricted}
\end{table}






\section{Acknowledgements}

Research is supported by the Hungarian NTP program TECH\_08-A2/2-2008-0097 (WEB2GRID), and by OTKA grant 104601.



\clearpage

\bibliographystyle{abbrvnat}
\bibliography{mybib}













\end{document}